\numberwithin{equation}{section}
\titleformat{\subsection}[runin]
       {\normalfont\bfseries}
       {\thesubsection}
       {0.5em}
       {}
       [.]
\newcommand{\Z}{\mathbb{Z}}
\newcommand{\N}{\mathbb{N}}
\newcommand{\Q}{\mathbb{Q}}
\newcommand{\F}{\mathbb{F}}
\newcommand{\Zpe}{\mathbb{Z}_{/p^e}}
\newcommand{\ZT}{\mathbb{Z}_{/T}}
\newcommand{\mM}{\mathcal{M}}
\newcommand{\mI}{\mathcal{I}}
\newcommand{\mS}{\mathcal{S}}
\newcommand{\mA}{\mathcal{A}}
\newcommand{\mZ}{\mathfrak{Z}}
\newcommand{\mX}{\mathcal{X}}
\newcommand{\mR}{\mathcal{R}}
\newcommand{\ba}{\boldsymbol{a}}
\newcommand{\bb}{\boldsymbol{b}}
\newcommand{\bd}{\boldsymbol{d}}
\newcommand{\tmA}{\widetilde{\mathcal{A}}}
\newcommand{\bh}{\boldsymbol{h}}
\newcommand{\gen}[1]{\langle {#1} \rangle}
\newtheorem{thrm}{Theorem}[section]
\newtheorem*{thrm*}{Theorem}
\newtheorem{lem}[thrm]{Lemma}
\newtheorem{obs}[thrm]{Observation}
\newtheorem{cor}[thrm]{Corollary}
\theoremstyle{definition}
\newtheorem{defn}[thrm]{Definition}
\newtheorem*{defn*}{Definition}
\theoremstyle{definition}
\theoremstyle{definition}
\title{The Skolem Problem in rings of positive characteristic}
\author{Ruiwen Dong\footnote{Magdalen College, University of Oxford, United Kingdom, email: ruiwen.dong@magd.ox.ac.uk} \and Doron Shafrir\footnote{Department of Mathematics, Ben Gurion University of the Negev, Be’er Sheva, Israel}}
\date{ }
\begin{document}
\maketitle

\begin{abstract}
    We show that the Skolem Problem is decidable in finitely generated commutative rings of positive characteristic.
    More precisely, we show that there exists an algorithm which, given a finite presentation of a (unitary) commutative ring $\mathcal{R} = \mathbb{Z}_{/T}[X_1, \ldots, X_n]/I$ of characteristic $T > 0$, and a linear recurrence sequence $(\gamma_n)_{n \in \mathbb{N}} \in \mathcal{R}^{\mathbb{N}}$, determines whether $(\gamma_n)_{n \in \mathbb{N}}$ contains a zero term.
    Our proof is based on two recent results: Dong and Shafrir (2026) on the solution set of S-unit equations over $p^e$-torsion modules, and Karimov, Luca, Nieuwveld, Ouaknine, and Worrell (2025) on solving linear equations over powers of two multiplicatively independent numbers.
    Our result implies, moreover, that the zero set of a linear recurrence sequence over a ring of characteristic $T = p_1^{e_1} \cdots p_k^{e_k}$ is effectively a finite union of $p_i$-normal sets in the sense of Derksen (2007). 
\end{abstract}

\section{Introduction and Main Results}

The Skolem Problem in a commutative ring $\mR$ is the problem of determining whether a given linear recurrence sequence $(\gamma_n)_{n \in \N}$ over $\mR$ contains a zero term.
The Skolem Problem is deeply related to central questions in program verification~\cite{ouaknine2015linear, almagor2021deciding}, control theory~\cite{blondel2000survey, fijalkow2019decidability}, dynamical systems~\cite{kannan1986polynomial, agrawal2015approximate}, and number theory~\cite{bell2006generalised, luca2023skolem}.
Despite its simple formulation, decidability of the Skolem Problem over $\Z$ (and equivalently, $\Q$) is a major open problem in mathematics and computer science: see~\cite{mahler1935arithmetische, lech1953note, Tijdeman1984, blondel2002presence, lipton2022skolem, luca2025large} for a selected list of partial results and recent progress.
Indeed, the celebrated Skolem-Mahler-Lech theorem~\cite{skolem1934verfahren} shows that the zero set of a linear recurrence sequence over $\Z$ is a union of a finite set and finitely many arithmetic progressions.
However, the Skolem-Mahler-Lech theorem is not \emph{effective}, meaning there is currently no known computable bound on this finite set, and hence the \emph{decidability} of the Skolem Problem over the ring $\Z$ remains open.

In contrast, decidability of the Skolem Problem in fields of prime characteristic (such as $\F_p(X)$) was shown by Derksen~\cite{derksen2007skolem}, who gave an \emph{effective} description of the zero set as a \emph{$p$-normal set}.
In this paper we continue Derksen's work in this direction by showing decidability of the Skolem Problem in \emph{rings} of arbitrary positive characteristics.
In contrast to fields, the characteristic of such a ring is not necessarily prime (e.g., the polynomial ring $\Z_{/6}[X]$ has characteristic $6$).
Thus a difficulty is to describe the interaction of different prime divisors of the characteristic.

Let $\mR$ be a commutative ring with unity $1$.
The \emph{characteristic} of the ring $\mR$ is defined as the smallest positive integer $T$ such that $T \cdot 1 = 0$ in $\mR$.
If no such positive integer exists, the ring is said to have characteristic zero.
A sequence $\gamma = (\gamma_0, \gamma_1, \gamma_2, \ldots) \in \mR^{\N}$ is called an \emph{linear recurrence sequence} if there exist $d \geq 1$ and $a_1, \ldots, a_d \in \mR$ with $a_d \neq 0$, such that
\begin{equation}\label{eq:recurrence}
    \gamma_{n} = a_1 \gamma_{n-1} + \cdots + a_d \gamma_{n-d}
\end{equation}
for all $n \geq d$.
Note that the sequence $\gamma$ is uniquely determined by the recurrence relation~\eqref{eq:recurrence} as well as its $d$ initial terms $\gamma_0, \ldots, \gamma_{d-1}$.
Since every term in $\gamma$ is an element in the subring generated by $a_1, \ldots, a_d, \gamma_0, \ldots, \gamma_{d-1}$, by restricting to this subring we can suppose $\mR$ to be finitely generated.
The main result of this paper is:
\begin{thrm}\label{thm:Skolem}
    Let $\mR$ be a finitely generated commutative (unitary) ring of characteristic $T > 0$.
    Given a linear recurrence sequence $\gamma \in \mR^{\N}$, it is decidable whether $\gamma$ contains a zero.
\end{thrm}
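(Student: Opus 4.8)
The plan is to peel the problem apart prime by prime and reduce each piece to the two cited black boxes.

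\emph{Step 1: Chinese remaindering.} Factor $T = p_1^{e_1}\cdots p_k^{e_k}$. Since $\ZT \cong \prod_{i=1}^k \Z_{/p_i^{e_i}}$, the ring splits computably as $\mR \cong \mR_1 \times \cdots \times \mR_k$ with $\mR_i = \mR \otimes_{\ZT} \Z_{/p_i^{e_i}}$ a finitely generated ring of characteristic $p_i^{e_i}$, and $\gamma$ splits correspondingly as $(\gamma^{(1)},\dots,\gamma^{(k)})$ with $\gamma^{(i)}$ a linear recurrence over $\mR_i$. Since $\gamma_n = 0$ iff $\gamma^{(i)}_n = 0$ for every $i$, the zero set of $\gamma$ is $\bigcap_{i=1}^k Z_i$ where $Z_i := \{n \in \N : \gamma^{(i)}_n = 0\}$. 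It therefore suffices to (a) compute an explicit description of each $Z_i$ and (b) decide whether $\bigcap_i Z_i \neq \emptyset$.

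\emph{Step 2: One prime power.} Fix $i$ and write $p = p_i$. Putting the recurrence in companion form gives $\gamma^{(i)}_n = \ell(C^n s_0)$ for a companion matrix $C$ over $\mR_i$; replacing $\mR_i$ by the subring $\mathcal S := \mR_i[C] \subseteq \mR_i^{d\times d}$ (a finite $\mR_i$-module by Cayley--Hamilton, hence itself a finitely generated ring of characteristic $p^e$) we may write $\gamma^{(i)}_n = \ell(m^n)$ for a single element $m \in \mathcal S$ (namely $m = C$) and a $\Z_{/p^e}$-linear functional $\ell$ on $\mathcal S$. The heart of the argument is then a structural reduction of the condition $\ell(m^n)=0$ to a finite disjunction of S-unit equations over $p^e$-torsion modules: using that the nilradical $\mathfrak N$ of $\mathcal S$ is nilpotent and that $\mathcal S/\mathfrak N$ is a reduced finitely generated $\F_p$-algebra, one filters by powers of $\mathfrak N$; after passing to an extension in which the multiplicative behaviour of $m$ is transparent (the nilpotent part of the dynamics contributes only finitely many candidate $n$) and reducing any polynomial-in-$n$ coefficients coming from repeated characteristic roots by splitting $n$ according to its base-$p$ expansion in the style of Derksen, the condition becomes a finite family of equations of the form $\sum_j c_j g_j^{\,n} = 0$ with the $g_j$ ranging over a finitely generated multiplicative group. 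By the Dong--Shafrir theorem the solution set of each such equation is an effectively computable $p$-normal set; since $p$-normal sets are closed under finite unions and intersections (Derksen), $Z_i$ is an effectively computable $p_i$-normal set.

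\emph{Step 3: Across the primes.} It remains to decide whether $\bigcap_{i=1}^k Z_i = \emptyset$. Expanding each $Z_i$ into its defining constituents --- finite sets, arithmetic progressions, and sets of the form $\{c_0 + \sum_l c_l\, p_i^{\,x_l} : x_1,\dots \in \N\}$ (possibly subject to congruences on the exponents) --- and distributing the intersection over these finite unions, $\bigcap_i Z_i$ becomes a finite union of systems, each equating an integer combination of powers of $p_1$ to one of powers of $p_2$, and so on. Because the $p_i$ are pairwise multiplicatively independent, solvability of each such system is exactly an instance of solving linear equations in powers of multiplicatively independent integers, decidable by the Karimov--Luca--Nieuwveld--Ouaknine--Worrell theorem. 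This gives the decidability claim, and unwinding the construction exhibits the zero set of $\gamma$ as an effectively computable finite union of $p_i$-normal sets in Derksen's sense.

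\emph{Main obstacle.} I expect Step 2 to be where the real work lies: passing from a linear recurrence over an \emph{arbitrary} finitely generated, possibly non-reduced, ring of characteristic $p^e$ to a clean finite family of S-unit equations over $p^e$-torsion modules. Controlling the nilpotents is subtle because $\gamma_n = 0$ is strictly stronger than $\gamma_n \in \mathfrak N$, yet the "next layer" of the nilradical filtration is not governed by a linear recurrence on any obvious index set; keeping the entire reduction effective, and taming the polynomial coefficients attached to repeated characteristic roots before the S-unit machinery applies, are the remaining technical hurdles. Step 3, by contrast, should be essentially a bookkeeping application of the KLNOW result once the per-prime descriptions are in hand.
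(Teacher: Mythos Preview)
Your global architecture (Chinese remaindering, then per-prime-power analysis via Dong--Shafrir, then intersection across primes via KLNOW) matches the paper's. But your calibration of where the difficulty lies is inverted, and Step~3 contains a genuine gap.

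\textbf{Step 2.} You flag the nilradical filtration as the main obstacle, and you are right to be worried: filtering by powers of $\mathfrak N$ does not obviously produce a tractable recursion, precisely because ``$\gamma_n=0$'' is not controlled by ``$\gamma_n\in\mathfrak N^j$'' in any linear way. The paper sidesteps this entirely. Instead of a nilradical filtration, it takes a \emph{primary decomposition} $\langle 0\rangle = P_1\cap\cdots\cap P_s$ and passes to each quotient $\mA/P_j$, where every zero-divisor is nilpotent. After splitting the characteristic polynomial and localizing at the non-zero-divisors among the roots $r_i$ and their differences $r_i-r_j$, every difference is now either a unit or nilpotent, which permits an exact partial-fraction expansion of the generating power series (Lemma~3.2). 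This yields $\alpha_n=\sum_i r_i^n\sum_j c_{ij}n^j$ for $n\gg 0$ with all surviving $r_i$ invertible; the polynomial factors are then killed by splitting into residue classes modulo $p^e$ (not by a full base-$p$ digit expansion), and Dong--Shafrir applies directly. So the paper's Step~2 is a clean reduction, not a struggle with filtrations.

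\textbf{Step 3.} This is \emph{not} bookkeeping, and your argument as written does not close. The KLNOW theorem decides linear equations in powers of \emph{two} multiplicatively independent bases only; the paper explicitly notes that it does not extend to three or more power predicates. Distributing $\bigcap_{i=1}^k Z_i$ over the elementary pieces produces, for $k\ge 3$, systems equating sums of powers of $p_1$, of $p_2$, and of $p_3$ simultaneously, which is outside KLNOW's scope. The missing idea, supplied in the paper as Proposition~1.3, is that the intersection of a $p_1$-normal set and a $p_2$-normal set is not merely decidable but can be effectively rewritten as a \emph{union} $T_1\cup T_2$ of a $p_1$-normal and a $p_2$-normal set (the key point being that the intersection of an elementary $p_1$-nested set with an elementary $p_2$-nested set is \emph{finite}, with an effective bound extracted from the structure of KLNOW solutions). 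This intersection-to-union conversion is exactly what enables the induction
\[
S_1\cap S_2\cap S_3\cap\cdots\cap S_k \;=\; (T_1\cap S_3\cap\cdots\cap S_k)\;\cup\;(T_2\cap S_3\cap\cdots\cap S_k),
\]
so that one never needs more than two bases at a time.
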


We say a few words about how the ring $\mR$ is represented.
Denote by $\ZT$ the quotient ring $\Z/T\Z = \{0, 1, \ldots, T-1\}$ and let $r_1, \ldots, r_N$ be the generators of $\mR$.
Since $T \cdot 1 = 0$ in $\mR$, we have a surjective homomorphism $\varphi$ from the polynomial ring $\ZT[X_1, \ldots, X_N]$ to $\mR$, defined by $X_i \mapsto r_i, i = 1, \ldots, N$.
Therefore $\mR$ can be written as a quotient $\ZT[X_1, \ldots, X_N]/I$, with $I = \ker(\varphi)$.
This ideal $I$ is finitely generated since $\ZT[X_1, \ldots, X_N]$ is Noetherian.
Therefore, throughout this paper, the ring $\mR$ is always represented as a quotient $\ZT[X_1, \ldots, X_N]/I$, where the generators of $I$ are explicitly given.

Theorem~\ref{thm:Skolem} for rings of \emph{prime} characteristics can be deduced from the case of fields following Derksen's result.
By~\cite[Section~9]{derksen2007skolem}, given any linear recurrence sequence $\gamma$ over a finitely generated ring of prime characteristic $p$, its zero set $\mZ(\gamma) \coloneqq \{n \in \N \mid \gamma_n = 0\}$ is effectively \emph{$p$-normal} (see Section~\ref{sec:prelim} for a formal definition).
For example, the linear recurrence sequence $\gamma$ over the polynomial ring $\F_p[X]$, defined by $\gamma_n = (X+1)^n - X^n - 1$, admits the zero set $\{n = p^k \mid k \in \N\}$~\cite[Example~1.3]{derksen2007skolem}.

\paragraph*{Overview of main contributions and proof structure.}

More generally, when considering a linear recurrence sequence $\gamma$ over a ring of \emph{non-prime} characteristic $T = p_1^{e_1} \cdots p_k^{e_k}$, we can use the Chinese Remainder Theorem to decompose the zero set $\mZ(\gamma)$ as an intersection $\mZ(\alpha_1) \cap \cdots \cap \mZ(\alpha_k)$ of zero sets, where each $\alpha_i$ is a linear recurrence sequence over a ring of characteristic $p_i^{e_i}$.
Then, in order to decide whether $\gamma$ contains a zero (i.e., whether $\mZ(\gamma)$ is non-empty), one is immediately confronted with two major obstacles.

The first obstacle is that Derksen's result does not apply to rings of prime-power characteristics $p^e, e \geq 2$, due to its reliance on the \emph{Frobenius homomorphism}.
To overcome this, in this paper we will extend Derksen's result from prime characteristic to prime-power characteristics $p^e$.
Namely, we will show the following.

\begin{restatable}{prop}{propSkolemprimepower}\label{prop:Skolem_primepower}
    Let $p$ be a prime number, $e$ be a positive integer, and $\mA$ be a finitely generated commutative unitary ring of characteristic $p^e$.
    For any given linear recurrence sequence $\alpha \in \mA^{\N}$, its zero set $\mZ(\alpha) \coloneqq \{n \in \N \mid \alpha_n = 0\}$ is effectively $p$-normal.
\end{restatable}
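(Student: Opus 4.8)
The plan is to transpose Derksen's proof in prime characteristic to rings of characteristic $p^e$ by an induction on $e$ whose base case $e = 1$ is Derksen's theorem, replacing the number-theoretic engine of his argument --- the description of solution sets of $S$-unit equations over fields of characteristic $p$ --- by the recent result of Dong and Shafrir on $S$-unit equations over $p^e$-torsion modules. First I would normalise the recurrence: writing $\alpha_n = \bv^\top M^n \bw$ for the companion matrix $M \in \mathrm{Mat}_d(\mA)$ and fixed vectors $\bv, \bw$, and passing to the finite commutative subring $\mC := \mA[M] \subseteq \mathrm{Mat}_d(\mA)$ --- still finitely generated of characteristic $p^e$ --- so that the monoid $\{M^n\}$ and, later, the relevant group of units live in a commutative ring. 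As in the field case, one then analyses the structure of $\{M^n\}$, now over a possibly non-reduced ring, to rewrite $\alpha_n$ in a ``generalised power-sum'' shape $\sum_i c_i(n)\,\lambda_i^{\,n}$, where the $\lambda_i$ lie in a finitely generated commutative ring of characteristic $p^e$ and --- exactly as in characteristic $p$ --- the ``coefficients'' $c_i(n)$ are not constants but depend on $n$ only through its base-$p$ digits, reflecting the non-semisimple part of $M$.

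The condition $\alpha_n = 0$ then says that a fixed linear combination of such terms vanishes in $\mC$ (or in a finite extension of it, again of characteristic $p^e$) --- an $S$-unit equation over the $p^e$-torsion additive group of that ring. To solve it I would induct on $e$. Reducing the recurrence modulo $p^{e-1}\mC$ produces a linear recurrence over a ring of characteristic dividing $p^{e-1}$, whose zero set $Z_0$ is effectively $p$-normal by the inductive hypothesis; and for $n \in Z_0$ the residual quantity $\alpha_n$ lies in the $p$-torsion --- hence $p^e$-torsion --- module $p^{e-1}\mC$. Restricting to each of the finitely many $p$-normal pieces making up $Z_0$ and substituting the corresponding base-$p$ digit pattern of $n$, the equation $\alpha_n = 0$ (for $n \in Z_0$) becomes exactly an instance of the $S$-unit equation over a $p^e$-torsion module treated by Dong and Shafrir, whose theorem describes the solution set as an effective finite union of $p$-normal sets. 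Intersecting with $Z_0$ --- using the effective closure of $p$-normal sets under finite intersection --- shows that $\mZ(\alpha)$ is effectively $p$-normal, closing the induction.

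I expect the main obstacle to be the reduction in the previous paragraph: faithfully translating the vanishing of $\alpha_n$ into the precise input format of Dong and Shafrir's theorem over a non-reduced ring --- pinning down the finitely generated group of ``units'', the $p^e$-torsion module, and, above all, the non-semisimple contributions, which in characteristic $p^e$ produce binomial-coefficient factors $\binom{n}{k} \bmod p^e$ governed by Kummer's carry rule rather than by the simpler Lucas congruences available when $e = 1$ --- and then checking that the base-$p$-digit description their theorem returns recombines with the set $Z_0$ from the lower level of the filtration into a single $p$-normal set for the \emph{same} prime $p$. Once this reduction is in place, effectiveness is automatic, being inherited from the effective Noetherianity of $\ZT[X_1, \dots, X_N]$, from Derksen's effective base case, and from the effectiveness built into the statement of Dong and Shafrir's result.
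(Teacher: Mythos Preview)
You correctly identify Dong--Shafrir's theorem as the engine, but the induction on $e$ you build around it is both unnecessary and contains a genuine gap at precisely the step you yourself flag as the main obstacle. In the inductive step you restrict $n$ to a $p$-normal piece of $Z_0$ and assert that the resulting vanishing condition is an $S$-unit equation in Dong--Shafrir's format. For an arithmetic-progression piece this is fine, but an elementary $p$-nested piece has the shape $\{a_0 + a_1 p^{\ell k_1} + \cdots + a_r p^{\ell k_r} : k_j \in \N\}$, and substituting such an $n$ into $\lambda_i^{\,n}$ produces $\lambda_i^{a_0}\prod_j (\lambda_i^{a_j})^{p^{\ell k_j}}$. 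The free variables $k_j$ now sit in the exponent of $p$, not as the integer exponents $z_{ij}$ in the equation $\sum_i X_1^{z_{i1}}\cdots X_N^{z_{iN}}\cdot m_i = m_0$ of Theorem~\ref{thm:primepower}; you give no mechanism to convert one into the other, and none is apparent. The ``recombination'' you worry about is indeed where the argument breaks.

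The paper's route is shorter and sidesteps this entirely, because $\mA$ is already $p^e$-torsion and Theorem~\ref{thm:primepower} can be invoked once, directly, with no induction on $e$ and no restriction to a precomputed $Z_0$. After splitting the characteristic polynomial (Lemma~\ref{lem:fsplit}), a primary decomposition of the zero ideal of $\mA$ reduces to the case where every zero-divisor is nilpotent --- this is the concrete replacement for your unspecified ``analysis of $\{M^n\}$ over a non-reduced ring''. A partial-fraction computation in $\mA[[Y]]$ (Lemma~\ref{lem:exppolysum}) then gives $\alpha_n = \sum_{i} r_i^{\,n} \sum_j c_{ij} n^j$ for all large $n$, with each $r_i$ invertible. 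The non-semisimple part is disposed of by the one-line observation $(n+p^e)^j \equiv n^j \pmod{p^e}$: the coefficients are simply periodic modulo $p^e$, so no Kummer carries or base-$p$ digit combinatorics enter at all. Splitting into the $p^e$ residue classes yields simple sums $\alpha_{p^e n+q} = \sum_i d_{i,q}\,(r_i^{p^e})^n$, each of which is already a one-variable $S$-unit equation over a $p^e$-torsion module and is fed straight into Theorem~\ref{thm:primepower} via Lemma~\ref{lem:zero_set_Z}.
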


The formal definition of \emph{$p$-normal sets} will be given in Section~\ref{sec:prelim} (Definition~\ref{def:pnormal}).
An example of $p$-normal sets is $\{1 + 5 \cdot p^{2a} + p^{2b} \mid a, b \in \N\}$.
In particular, these are special forms of \emph{$p$-automatic sets}~\cite{wolper2000construction}, and is a generalization of linear combinations of powers of $p$.
Our key argument to proving Proposition~\ref{prop:Skolem_primepower} is a deep result of Dong and Shafrir~\cite[Theorem~1.3]{dong2025s}, who showed that the solution set to an \emph{S-unit equation} over a $p^e$-torsion module is effectively $p$-normal.


The second obstacle is to compute the intersection $\mZ(\alpha_1) \cap \cdots \cap \mZ(\alpha_k)$ of the zero sets.
Due to Proposition~\ref{prop:Skolem_primepower}, this boils down to computing the intersection of $p_i$-normal sets for \emph{different} primes $p_1, \ldots, p_k$.
In general, it is unknown whether one can effectively compute the intersection of \emph{$p_i$-automatic} sets for different $p_i$'s~\cite{Hieronymi2022, albayrak2023quantitative}.
However, we will show that this is possible in the special case of $p_i$-normal sets.
In particular, we will prove that every \emph{intersection} of $p_i$-normal sets can actually be written as a \emph{union} of $p_i$-normal sets:
\begin{restatable}{prop}{propintersectionisunion}\label{prop:intersection_is_union}
    Let $p_1, \ldots, p_k \in \Z_{>0}$ be multiplicatively independent positive integers.
    For $i = 1, \ldots, k$, let $S_i$ be a $p_i$-normal subset of $\N$.
    Then the intersection $S_1 \cap \cdots \cap S_k$ is effectively equal to a union $T_1 \cup \cdots \cup T_k$, where each $T_i$ is a $p_i$-normal subset of $\N$.
    In particular, it is decidable whether the intersection $S_1 \cap \cdots \cap S_k$ is empty.
\end{restatable}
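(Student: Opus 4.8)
The plan is to reduce the statement to a pairwise intersection ($k=2$), to rephrase such an intersection as an exponential Diophantine equation in powers of two multiplicatively independent numbers, and then to extract the shape of its solution set from the theorem of Karimov, Luca, Nieuwveld, Ouaknine and Worrell.

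First I would reduce to $k=2$. A finite union of $p_i$-normal sets is $p_i$-normal and every finite set is $p_i$-normal for each $i$, so one may induct on $k$: assuming $S_1 \cap \cdots \cap S_{k-1} = T'_1 \cup \cdots \cup T'_{k-1}$ with each $T'_i$ a $p_i$-normal set, we have $S_1 \cap \cdots \cap S_k = \bigcup_{i=1}^{k-1}(T'_i \cap S_k)$; since $p_i$ and $p_k$ are multiplicatively independent, the $k=2$ case gives $T'_i \cap S_k = U_i \cup V_i$ with $U_i$ a $p_i$-normal set and $V_i$ a $p_k$-normal set, so setting $T_i := U_i$ for $i < k$ and $T_k := V_1 \cup \cdots \cup V_{k-1}$ yields the claim for $k$. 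It thus suffices, given multiplicatively independent $p,q$ together with a $p$-normal set $S$ and a $q$-normal set $S'$, to write $S \cap S'$ effectively as the union of a $p$-normal and a $q$-normal set.

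Next I would set up the equation. By Definition~\ref{def:pnormal}, and after routine preprocessing (passing to residue classes of the exponents and replacing $p$ by a suitable power), $S$ is a finite union of sets of the form $\{\, u_0 + \sum_{i=1}^{a} u_i\, p^{m_i} : \bm \in M \,\}$, for fixed integers $u_i$ and an effectively given semilinear set $M \subseteq \N^a$ describing the admissible exponent tuples (in particular which coordinates are bounded and which are free); likewise $S'$ is a finite union of such sets over base $q$, with parameters $w_j$ and $\bn \in M'$. Distributing the intersection over these finite unions, it is enough to intersect a single $p$-piece with a single $q$-piece, so that $n \in S \cap S'$ precisely when the system
\[
  u_0 + \sum_{i=1}^{a} u_i\, p^{m_i} \;=\; w_0 + \sum_{j=1}^{b} w_j\, q^{n_j}, \qquad \bm \in M,\ \bn \in M',
\]
is solvable, $n$ being the common value. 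This is a linear equation over powers of the multiplicatively independent numbers $p$ and $q$ with semilinear constraints on the exponents, i.e.\ exactly the class of equations that Karimov, Luca, Nieuwveld, Ouaknine and Worrell solve effectively; their theorem therefore yields an effective description of its solution set.

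Finally I would translate this description into one for the admissible values $n$. The key phenomenon is a dichotomy forced by multiplicative independence: if the two basic pieces each contribute genuinely sparse, unbounded behaviour in their own base, then $S \cap S'$ restricted to them is finite, since an infinite such intersection would require matching an unbounded power-of-$p$ sum with an unbounded power-of-$q$ sum, hence a mixed vanishing subsum with unbounded exponents, which the structure theorem rules out. The infinite part of $S \cap S'$ thus comes only from pieces carrying a periodic (congruence) component on at least one side: restricting a basic $p$-piece by a congruence keeps it $p$-normal (the congruence becomes a semilinear condition on the exponents), while two periodic components combine through the Chinese Remainder Theorem into an eventually periodic set, which is normal with respect to either base. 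Sorting the resulting pieces then produces the decomposition $T_1 \cup T_2$, and emptiness of $S \cap S'$ becomes decidable because emptiness of a $p$-normal set is. The step I expect to be the main obstacle is precisely this last translation: reconciling the exact output format of the structure theorem with the grammar of Definition~\ref{def:pnormal}, handling the degenerate (cancellation) families of the exponential equation, tracking the semilinear exponent constraints through the decomposition, and dealing with the carries produced when some $u_i, w_j$ are negative.
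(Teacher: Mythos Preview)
Your proposal is correct and follows essentially the same route as the paper: induct down to $k=2$, split each $p_i$-normal set into its elementary $p_i$-nested pieces and its arithmetic-progression pieces, and then run the four-case analysis you describe. The step you flag as the main obstacle is handled in the paper not by parsing the full output of the Karimov--Luca--Nieuwveld--Ouaknine--Worrell structure theorem directly, but by first extracting from it (via a short induction on the number of terms) a single effective bound $C$ with $\bigl|\sum_i a_i p^{n_i}\bigr|\le C$ for every solution of $\sum_i a_i p^{n_i}+\sum_j b_j q^{n'_j}=d$; this immediately makes the ``sparse $\times$ sparse'' intersection finite and removes the need to worry about cancellations, carries, or semilinear bookkeeping.
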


Our main tool for proving Proposition~\ref{prop:intersection_is_union} comes from a recent paper by Karimov, Luca, Nieuwveld, Ouaknine, and Worrell~\cite{karimov2025decidability}, which gave an effective procedure to decide the existential fragment of Presburger arithmetic extended with \emph{two} power predicates.
For example, this allows an effective resolution of equations of the form $1 + 5 \cdot p_1^{2a} + p_1^{2b} = 3 + p_2^c$, over the variables $a, b, c \in \N$.
However, the effective procedure in~\cite{karimov2025decidability} does not extend to \emph{more than two} power predicates.
Therefore, additional insights are needed to compute the intersection of $p_i$-normal sets for $k \geq 3$ distinct primes $p_1, \ldots, p_k$.

Combining Proposition~\ref{prop:Skolem_primepower} and~\ref{prop:intersection_is_union}, one also obtains an effective characterization of the zero set of a linear recurrence sequence over a ring $\mR$ of characteristic $T = p_1^{e_1} \cdots p_k^{e_k}$: namely it is effectively a finite union of $p_i$-normal subset of $\N$ for $i = 1, \ldots, k$.

\section{Preliminaries}\label{sec:prelim}

\subsection{$p$-Normal Sets}
We define $p$-normal subsets of $\Z^n$, $\Z$, and $\N$, following the convention established in~\cite{derksen2007skolem} and~\cite{derksen2015linear}.

\begin{defn}[{reformulation of~\cite{derksen2015linear}}]\label{def:pnormal}
    A set $S \subseteq \Z^n$ is called \emph{elementary $p$-nested} if it is of the form
    \begin{equation}\label{eq:pnested_Zn}
    \{\ba_0 + p^{\ell k_1} \ba_1 + \cdots + p^{\ell k_r} \ba_r \mid k_1, k_2, \ldots, k_r \in \N\},
    \end{equation}
    where $\ell \geq 1$ and $\ba_0, \ba_1, \ldots, \ba_r \in \Q^n$.
    A singleton is by definition elementary $p$-nested with $r = 0$.
    Note that the entries of $\ba_i$'s do not have to be integers: for example, the set $\left\{\frac{1}{2} + 3^k \cdot \frac{1}{2}\;\middle|\; k \in \N\right\} \subseteq \Z$ is elementary $p$-nested.
    
    A subset $S$ of $\Z^n$ is called \emph{$p$-succinct}, if it is of the form $H + D \coloneqq \{\bh + \bd \mid \bh \in H, \bd \in D\}$, where $H$ is a subgroup of $\Z^n$ and $D$ is elementary $p$-nested.
    For example, the set $\{(x, x+p^{k}) \mid x \in \Z, k \in \N\} \subseteq \Z^2$ is $p$-succinct since it is the sum of the subgroup $H = \{(x, x) \mid x \in \Z\} \leq \Z^2$ and the elementary $p$-nested set $D = \{(0, 1) \cdot p^{k} \mid k \in \N\}$.
    
    A subset $S$ of $\Z^n$ is called \emph{$p$-normal} if it is a finite union of $p$-succinct sets.
    The empty set is by definition $p$-normal.
    A set is called \emph{effectively} $p$-normal if all its defining coefficients can be effectively computed.
\end{defn}

Specializing Definition~\ref{def:pnormal} at $n = 1$, we obtain a simpler characterization of $p$-normal subsets of $\Z$.
Indeed, a $p$-succinct subset $S = H + D \subseteq \Z$ is either elementary $p$-nested (if the subgroup $H$ is trivial), or it is a finite union of cosets $(a \Z + b_1) \cup \cdots \cup (a \Z + b_k)$ with $a \geq 1$ (if the subgroup $H$ is of the form $a \Z$).
Therefore, $p$-normal subsets of $\Z$ can be equivalently defined as follows.

\begin{obs}\label{obs:normal_Z}
    A subset $S$ of $\Z$ is $p$-normal, if and only if it is a finite union of elementary $p$-nested sets and sets of the form $a \Z + b$, $a \geq 1, b \in \Z$.
\end{obs}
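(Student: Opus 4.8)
The plan is to simply unwind Definition~\ref{def:pnormal} in the one-dimensional case $n = 1$ and invoke the classification of subgroups of $\Z$: every subgroup is either trivial or of the form $a\Z$ with $a \geq 1$. This reduces the two directions of the stated equivalence to short, essentially formal arguments, so the work is purely bookkeeping.

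For the direction from right to left, I would observe that an elementary $p$-nested subset $D \subseteq \Z$ is $p$-succinct, being equal to $\{0\} + D$ with $\{0\}$ the trivial subgroup; and that a coset $a\Z + b$ with $a \geq 1$, $b \in \Z$ is $p$-succinct, being equal to $(a\Z) + \{b\}$, where the singleton $\{b\}$ is elementary $p$-nested (the case $r = 0$). Since a $p$-normal set is by definition a finite union of $p$-succinct sets, any finite union of elementary $p$-nested sets and cosets $a\Z + b$ is $p$-normal.

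For the direction from left to right, suppose $S \subseteq \Z$ is $p$-normal, say $S = S_1 \cup \cdots \cup S_m$ with each $S_j = H_j + D_j$ $p$-succinct, $H_j$ a subgroup of $\Z$ and $D_j$ elementary $p$-nested. Since $0 \in H_j$, each $D_j = \{0\} + D_j$ is contained in $H_j + D_j = S_j \subseteq \Z$, so the $D_j$ are in fact sets of integers. If $H_j$ is trivial, then $S_j = D_j$ is already elementary $p$-nested. Otherwise $H_j = a_j\Z$ with $a_j \geq 1$, and then $S_j = a_j\Z + D_j = \bigcup_{d \in D_j}(a_j\Z + d)$; as the coset $a_j\Z + d$ depends only on the residue $d \bmod a_j$, this union is the same as $\bigcup_{r \in R_j}(a_j\Z + r)$ over the finite residue set $R_j := \{d \bmod a_j : d \in D_j\} \subseteq \{0, 1, \dots, a_j - 1\}$, a finite union of cosets. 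Taking the union over $j = 1, \dots, m$ then exhibits $S$ as a finite union of elementary $p$-nested sets and cosets of the stated form.

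There is no genuine obstacle here: the statement is a reformulation of the definition, and the only point needing (a trivial) justification is that reducing an elementary $p$-nested set of integers modulo a fixed $a \geq 1$ yields only finitely many residues, which is immediate since $\Z/a\Z$ is finite. If an effective version is wanted, one further notes that $p^{\ell k} b_i \bmod a$ is eventually periodic in $k$, so the residue set $R_j$ is explicitly computable from the defining data $\ell, b_0, \dots, b_r$ of $D_j$.
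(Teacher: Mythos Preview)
Your proof is correct and follows exactly the same approach as the paper: specialize Definition~\ref{def:pnormal} to $n=1$, use that every subgroup of $\Z$ is either trivial or $a\Z$ with $a\ge 1$, and observe that $a\Z + D$ is a finite union of cosets since $D$ meets only finitely many residues modulo $a$. The paper states this argument in a single sentence preceding the observation, whereas you have spelled out both directions and the finiteness point in full detail, but there is no substantive difference.
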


For subsets of $\N$, the definition is slightly more technical:
\begin{defn}
    A subset $S$ of $\N$ is called \emph{$p$-normal}, if there exists $n \geq 0$, a finite set $F \subseteq \{0, 1, \ldots, n-1\}$, and a $p$-normal subset $S'$ of $\Z$, such that
    \[
    S = \left(S' \cap [n, \infty) \right) \cup F.
    \]
\end{defn}

In other words, a set $S \subseteq \N$ is $p$-normal if it \emph{ultimately} coincides with a $p$-normal subset of $\Z$.
It is clear from definition that $p$-normal sets (of $\Z^n$, $\Z$, and $\N$) are closed under finite union and affine transformations.
The following are some other closure properties for $p$-normal sets:
\begin{lem}[{\cite[Lemma~9.5]{derksen2007skolem}}]\label{lem:intersection_normal_N}
    Let $S_1, S_2$ be $p$-normal subsets of $\N$ (or $\Z$). Then the intersection $S_1 \cap S_2$ is effectively $p$-normal.
\end{lem}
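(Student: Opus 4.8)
\textbf{Reduction to $\Z$ and to basic pieces.}
The plan is to first dispose of the $\N$-versus-$\Z$ distinction, then reduce to a concrete study of linear equations in powers of $p$. Writing $S_i = (S_i' \cap [n_i,\infty)) \cup F_i$ with $S_i'$ effectively $p$-normal in $\Z$ and $F_i$ finite, the intersection $S_1\cap S_2$ ultimately agrees with $S_1'\cap S_2'$, while the finitely many elements of $F_1\cup F_2$ can each be tested for membership in the other set (membership in an elementary $p$-nested set, hence in any $p$-normal set, is decidable: the exponents $k_i$ are bounded via the $p$-adic valuation and the absolute value of the candidate element). So it suffices to intersect two $p$-normal subsets of $\Z$. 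Writing each via Observation~\ref{obs:normal_Z} as a finite union and distributing the intersection, it then suffices to intersect two sets, each either an arithmetic progression $a\Z+b$ with $a\geq 1$ or an elementary $p$-nested set.

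\textbf{The easy cases.}
If both are arithmetic progressions, their intersection is an arithmetic progression or empty, by the Chinese Remainder Theorem. If one is an elementary $p$-nested set $D=\{a_0+p^{\ell k_1}a_1+\cdots+p^{\ell k_r}a_r\mid (k_1,\ldots,k_r)\in\N^r\}$ and the other is $a\Z+b$, we clear denominators so that all data are integral; writing $a=p^v a'$ with $p\nmid a'$, each term $p^{\ell k_i}a_i$ is $\equiv 0\pmod{p^v}$ once $\ell k_i\geq v$ and is purely periodic in $k_i$ modulo $a'$ (with period the multiplicative order of $p^\ell$ modulo $a'$), so the congruence ``$\equiv b\pmod a$'' partitions $\N^r$ into finitely many classes, on each of which it holds identically or never. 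Restricting a $k_i$ to a fixed residue class modulo $M_i$ (substitute $k_i=M_ik_i'+t_i$ and pass to the base $p^{\ell M_i}$, then to the lcm of all such bases) or to a bounded set turns $D$ into a finite union of elementary $p$-nested sets, so $D\cap(a\Z+b)$ is again a finite union of elementary $p$-nested sets.

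\textbf{The main case: two elementary $p$-nested sets.}
After passing to residue classes of the exponents as above we may assume a common modulus, so $D_1=\{a_0+\sum_{i=1}^r q^{k_i}a_i\}$ and $D_2=\{b_0+\sum_{j=1}^s q^{m_j}b_j\}$ with $q=p^\ell$ and exponents ranging over $\N$. An element of $D_1\cap D_2$ is exactly a value $a_0+\sum_i q^{k_i}a_i$ for some exponent tuple solving the single linear equation $\sum_i a_iq^{k_i}-\sum_j b_jq^{m_j}=b_0-a_0$. We would branch on the \emph{equality pattern} of the $r+s$ exponents, and on whether each block's common value is ``large'' or lies in a fixed bounded range (the bounded choices contributing finitely many subcases); after merging equal exponents, summing coefficients, and discarding any block whose coefficient sum vanishes, we are left with an equation $\sum_{u=1}^\tau e_uq^{n_u}=\mu$ with pairwise distinct exponents $n_u$, fixed nonzero $e_u$, and fixed $\mu=b_0-a_0$. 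A $p$-adic (equivalently $q$-adic) valuation argument combined with Archimedean growth estimates — induct on $\tau$: when $\mu\neq 0$ the smallest exponent is pinned by the valuation of $\mu$; when $\mu=0$ the equation constrains only the gaps; in either case the largest term cannot dominate the sum of the rest, which bounds the top gap and lets one merge the top two terms into a new nonzero coefficient and recurse — shows that the solution set is a finite union of sets obtained by pinning some $n_u$ to constants or to fixed offsets of other $n_u$ and letting the remaining exponents vary independently over $\N$. Substituting back, within each branch every $k_i$ becomes (an independent parameter) $+$ (a constant), so $q^{k_i}$ becomes (a constant)$\,\times q^{(\text{parameter})}$; grouping the $k_i$ sharing a parameter, the value takes the form $a_0'+\sum_\beta q^{Mj_\beta}a_\beta'$ with the $j_\beta$ free over $\N$, i.e.\ an elementary $p$-nested set with base $q^M=p^{\ell M}$. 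Taking the union over all branches shows $D_1\cap D_2$ is effectively a finite union of elementary $p$-nested sets, hence $p$-normal. (Carried out in $\Z^n$, this is in essence~\cite[Lemma~9.5]{derksen2007skolem}.)

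\textbf{The main obstacle.}
The crux is the valuation/growth induction in the main case, and specifically verifying that once it terminates no ``ordering'' constraints survive among the free exponents: a set such as $\{q^x+q^y\mid 0\leq x<y\}$ is not visibly $p$-normal, so one must be sure the pushforward really lands inside the $p$-normal class. The mechanism that rules this out is exactly the non-vanishing of the combined coefficients $e_u$ at every stage: a gap can be ``freed'' only when a sub-sum of terms cancels internally, and with all coefficients nonzero and all exponents distinct such a configuration forces further pinnings rather than an isolated inequality. Making this bookkeeping precise — together with the base change needed to absorb the residue and spacing conditions throughout — is the technical heart of the argument.
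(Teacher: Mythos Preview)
The paper does not prove this lemma; it simply cites \cite[Lemma~9.5]{derksen2007skolem} and uses it as a black box, so there is no in-paper argument to compare against. Your reductions --- from $\N$ to $\Z$, then via Observation~\ref{obs:normal_Z} to pairs of basic pieces, and the three easy cases --- are correct and standard.

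For the main case, the obstacle you single out is genuine, but the one-sentence mechanism you offer does not close it. The recursion \emph{can} produce several nested free parameters. Starting from $\sum_u e_u q^{n_u}=\mu$ with $\mu\neq 0$, pinning one exponent may make the new right-hand side vanish; factoring out the new minimum then frees one parameter, and after a further pinning the right-hand side may vanish again, freeing a second parameter sitting strictly above the first. The image of such a branch under $a_0+\sum_i a_i q^{k_i}$ then has the shape $B_0+B_1 q^{n}+B_2 q^{n+m}$ with $n,m$ both free in $\N$, i.e.\ $\{B_0+B_1 q^{K}+B_2 q^{L}\mid K\le L\}$ --- precisely the ordered configuration you were trying to exclude. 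Your claim that ``non-vanishing of the combined coefficients \dots\ forces further pinnings rather than an isolated inequality'' does not prevent this: all coefficients remain nonzero throughout, yet an inequality between free parameters survives. So the plan, as written, has a real gap at exactly the point you flagged, and showing that such staircase images are themselves $p$-normal requires an additional argument you have not supplied.

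Within the paper's own toolkit there is a one-line route past this difficulty: $D_1\times D_2$ is elementary $p$-nested in $\Z^2$ (take $\ba_0=(a_0,b_0)$ and coefficient vectors $(a_i,0)$ and $(0,b_j)$), and $D_1\cap D_2=\{z\in\Z\mid z\cdot(1,1)\in D_1\times D_2\}$, so Lemma~\ref{lem:nestedZn_to_Z} with $B=\{0\}$ and $\bb=(1,1)$ gives effective $p$-normality of $D_1\cap D_2$ immediately. This merely relocates the hard work into the proof of that lemma, but it shows where the ordering bookkeeping is actually absorbed.
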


\begin{lem}[{Special case of~\cite[Proposition~2.1]{derksen2015linear}}]\label{lem:nestedZn_to_Z}
    Let $\bb \in \Z^{n}$, $B$ be a subgroup of $\Z^n$ and $U \subseteq \Z^{n}$ be an elementary $p$-nested set.
    Then, the set of integers $\{z \in \Z \mid z \cdot \bb \in B+U\}$ is effectively $p$-normal.
\end{lem}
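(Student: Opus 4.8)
The plan is to unravel the membership $z\bb \in B+U$ into an explicit system of linear equations and congruences in the integer unknown $z$ together with the ``power unknowns'' $p^{\ell k_1},\dots,p^{\ell k_r}$, and then to solve that system using the structure theory of linear equations over powers of a single prime. First I would put $B$ in Smith normal form: there is a $\Z$-basis $f_1,\dots,f_n$ of $\Z^n$ and positive integers $d_1 \mid \cdots \mid d_s$ (with $s=\rank B$) such that $B=\bigoplus_{i=1}^{s} d_i\Z f_i$. Writing $\bb$ and the $\ba_j$ in these coordinates, with $i$-th entries $\beta_i$ and $\alpha_{j,i}$, the condition $z\bb-\ba_0-\sum_{j=1}^{r}p^{\ell k_j}\ba_j\in B$ becomes the conjunction of the equations $\sum_{j=1}^{r}\alpha_{j,i}\,p^{\ell k_j}=\beta_i z-\alpha_{0,i}$ for $s<i\le n$ and the congruences $\sum_{j=1}^{r}\alpha_{j,i}\,p^{\ell k_j}\equiv \beta_i z-\alpha_{0,i}\pmod{d_i}$ for $1\le i\le s$; these jointly force the vector to be integral, so no separate integrality condition is needed. (Equivalently: $B+U$ is $p$-succinct, hence $p$-normal in $\Z^n$, and the task is to transport its slice along the line $\Z\bb$ back to the parameter $z$.)

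Second I would dispose of the congruences. Enlarging the modulus to a common multiple $D$ of the $d_i$ and of all denominators occurring among the $\alpha_{j,i}$, the sequence $(p^{\ell m}\bmod D)_{m\in\N}$ is ultimately periodic; fix a common preperiod bound $t$ and common period $q$. Partition $\N^r$ by the ``shape'' of $(k_1,\dots,k_r)$: for each $j$, either $k_j<t$, or $k_j\ge t$ with $k_j\bmod q$ equal to a prescribed residue. This yields finitely many classes, and on a fixed class every $p^{\ell k_j}$ has a fixed residue modulo $D$, so the congruences degenerate into a fixed (possibly empty) union of residue classes for $z$ modulo $\lcm(d_1,\dots,d_s)$ --- an eventually periodic, hence $p$-normal, constraint on $z$ --- while the shape restricts each surviving unbounded exponent to a single arithmetic progression $k_j=c_j+qm_j$. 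As finite intersections of $p$-normal subsets of $\N$ are effectively $p$-normal (Lemma~\ref{lem:intersection_normal_N}), it remains to handle, on each shape, the linear equations alone.

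The core of the argument --- and the step I expect to be the main obstacle --- is the linear system $\sum_j\alpha_{j,i}\,p^{\ell k_j}=\beta_i z-\alpha_{0,i}$ in the unbounded exponents together with $z$. I would induct on the number $r$ of free exponents. After fixing one of the finitely many weak orderings of the $k_j$ and peeling off the block of exponents with the smallest value $k$, one of two things happens: either a $p$-adic valuation estimate on $\beta_i z-\alpha_{0,i}$ pins $k$ down to an explicit affine function of $z$, whereupon one substitutes the corresponding power and is left with fewer free exponents; or the leading coefficients satisfy a $\Z$-linear relation that lets one cancel $p^{\ell k}$, divide through, and again reduce $r$. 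Each branch introduces only finitely many subcases and only restricts surviving exponents to arithmetic progressions --- operations preserving $p$-normality --- and the induction bottoms out at $r=0$, where the condition is $z\bb\in B+\{\ba_0\}$: this is empty unless $\ba_0\in\Z^n$, and otherwise is the set of $z$ with $z\,\overline{\bb}=\overline{\ba}_0$ in the finitely generated abelian group $\Z^n/B$, which by the order of $\overline{\bb}$ is empty, a singleton, or a full coset $z_0+q'\Z$, in every case effectively $p$-normal. Since Smith normal form, the periods, the orderings, the branchings, and membership questions in $\Z^n/B$ are all computable, the resulting $p$-normal description of $\{z\in\Z\mid z\bb\in B+U\}$ is effective. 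This is precisely the specialization of~\cite[Proposition~2.1]{derksen2015linear} to a single free integer variable.
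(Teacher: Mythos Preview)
The paper does not supply its own proof of this lemma --- it is quoted without argument as a special case of \cite[Proposition~2.1]{derksen2015linear} --- so the question is whether your outline stands on its own. Your first two reductions (Smith normal form to split $z\bb\in B+U$ into exact equations plus congruences, then discharging the congruences via the eventual periodicity of $p^{\ell m}\bmod D$ and a finite split on shapes) are correct and standard.

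The genuine gap is in the inductive step for the exact equations. You assert that when the coefficients on the minimal block do not cancel, ``a $p$-adic valuation estimate on $\beta_i z-\alpha_{0,i}$ pins $k$ down to an explicit affine function of $z$''. But the valuation comparison only yields $\ell k+v_p\bigl(\sum_{j\in J}\alpha_{j,i}\bigr)=v_p(\beta_i z-\alpha_{0,i})$, and $v_p(\beta_i z-\alpha_{0,i})$ is not affine in $z$ --- it is typically unbounded as $z$ ranges over $\Z$. Substituting such a $k$ back does not return a system of the same shape with one fewer exponent, so the induction as written does not close. The underlying problem is that you are trying to drag the free integer unknown $z$ through the exponent induction. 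The route taken in \cite{derksen2015linear} is to \emph{eliminate $z$ first}: if some $\beta_{i^\ast}\neq 0$, solve equation $i^\ast$ for $z$ as a $\Q$-affine combination of the $p^{\ell k_j}$ and substitute into the others, leaving a system $\sum_j c_{j,i}\,p^{\ell k_j}=C_i$ with \emph{constant} right-hand sides (if all $\beta_i=0$ the system is already $z$-free and the $z$-set is a union of residue classes). Only then does the peel-off-the-minimum induction work, because the valuation argument now pins the minimal exponent to a \emph{specific integer} rather than to a function of $z$; even here some further case-splitting is needed when the leading sum itself has $p$-adic valuation $\ge\ell$. The admissible $(k_j)$ are finally pushed forward through the affine expression for $z$ and intersected with the congruence classes from your step~2. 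Without the elimination of $z$ your sketch does not yet amount to a proof.
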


From Lemma~\ref{lem:nestedZn_to_Z}, we immediately obtain:
\begin{cor}\label{cor:normalZn_to_Z}
    Let $S \subseteq \Z^{n}$ be a $p$-normal set and let $\bb \in \Z^{n}$.
    Then, the set of integers $\{z \in \Z \mid z \cdot \bb \in S\}$ is effectively $p$-normal.
\end{cor}
\begin{proof}
    Since $p$-normal subsets of $\Z$ are closed under finite union, it suffices to prove the case where $S$ is $p$-succinct, which corresponds exactly to Lemma~\ref{lem:nestedZn_to_Z}.
\end{proof}

\subsection{Commutative Algebra}
All rings considered in this paper are commutative and unitary.
We recall some standard definitions from commutative algebra~\cite{eisenbud2013commutative}.

\begin{defn}\label{def:commalg}
    Let $R$ be an effective commutative Noetherian ring.
    \begin{enumerate}[nosep, label=(\roman*)]
        \item An element $r \in R$ is called a \emph{zero-divisor} if there exists a non-zero $x$ in $R$ such that $rx = 0$. For example, $3$ is a zero-divisor of the ring $\Z/12\Z$.
        \item An element $r \in R$ is called \emph{nilpotent} if there exists $n \in \N$ such that $r^n = 0$. For example, $6$ is nilpotent in $\Z/12\Z$.
        \item A proper ideal $I \subsetneq R$ is called \emph{prime} if for every $a, b \in R$ such that $ab \in I$, either $a \in I$ or $b \in I$.
        Equivalently, $I$ is prime if the ring $R/I$ has no zero-divisor other than zero.
        \item A proper ideal $I \subsetneq R$ is called \emph{primary} if for every $a, b \in R$ such that $ab \in I$, either $a \in I$ or some power of $b$ is in $I$.
        Equivalently, $I$ is primary if every zero-divisor of $R/I$ is nilpotent.
        \item Let $I$ be an ideal in $R$. Then $I$ can always be written as a finite union of primary ideals: $I = P_1 \cap \cdots \cap P_n$. Such a finite union is called a \emph{primary decomposition} of $I$.
        Given a finite set of generators of $I$, a primary decomposition of $I$ can be effectively computed~\cite{rutman1992grobner}.
        \item Let $S \subset R$ be a set of non zero-divisors of $R$ and denote by $\widetilde{S} \coloneqq \{s_1 s_2 \cdots s_k \mid k \in \N, s_1, \ldots, s_k \in S\}$ its multiplicative closure. 
        The \emph{localization} $S^{-1}R$ is defined as the ring $\left\{\frac{r}{s} \;\middle|\; r \in R, s \in \widetilde{S}\right\}$, with $\frac{r_1}{s_1} = \frac{r_2}{s_2}$ if and only if $s_1 r_2 - s_2 r_1$ is a zero-divisor.
        This ring is endowed with the operations $\frac{r_1}{s_1} + \frac{r_2}{s_2} = \frac{r_1 s_2 + r_2 s_1}{s_1 s_2}$ and $\frac{r_1}{s_1} \cdot \frac{r_2}{s_2} = \frac{r_1 r_2}{s_1 s_2}$.
        In particular, $S^{-1}R$ contains $R$ as a subring $\left\{\frac{r}{1} \;\middle|\; r \in R\right\}$.
        For example, the localization $\{2\}^{-1} \Z$ is the ring of rational numbers whose denominators are powers of $2$.
        Note that in order for the canonical map $R \rightarrow S^{-1}R$ to be injective, the set $S$ must not contain zero-divisors.
        \item An \emph{$R$-module} is defined as an abelian group $(\mM, +)$ along with an operation $\cdot \;\colon R \times \mM \rightarrow \mM$, satisfying $r \cdot (m+m') = r \cdot m + r \cdot m'$, $(r + s) \cdot m = r \cdot m + s \cdot m$, $rs \cdot m = r\cdot (s \cdot m)$ and $1 \cdot m = m$.
        For example, an ideal of $R$ is an $R$-submodule of $R$. 
        For any $d \in \N$, $R^d$ is an $R$-module by $s \cdot (r_1, \ldots, r_d) = (sr_1, \ldots, sr_d)$.
        \item Given two $R$-modules $\mM, \mM'$ such that $\mM \supseteq \mM'$, define the quotient $\mM/\mM' \coloneqq \{\overline{m} \mid m \in \mM\}$ where $\overline{m_1} = \overline{m_2}$ if and only if $m_1 - m_2 \in \mM'$.
        This quotient is also an $R$-module.
        An $R$-module is called \emph{finitely presented} if it can be written as a quotient $R^d/\gen{v_1, \ldots, v_k}$ for some $d \in \N$ and some $v_1, \ldots, v_k \in R^d$. 
        Here, $\gen{v_1, \ldots, v_k}$ denotes the $R$-submodule generated by the elements $v_1, \ldots, v_k$.
    \end{enumerate}
\end{defn}

For $T > 0$, denote by $\ZT[X_1^{\pm}, \ldots, X_N^{\pm}]$ the Laurent polynomial ring over $\ZT$ with $n$ variables: this is the set of polynomials over the variables $X_1, X_1^{-1}, \ldots, X_N, X_N^{-1}$, with coefficients in $\ZT$, such that $X_i X_i^{-1} = 1$ for all $i$.
The following deep theorem by Dong and Shafrir is key to proving our main result in case of prime-power characteristics.

\begin{restatable}[{\cite[Theorem~1.3]{dong2025s}}]{thrm}{thmprimepower}\label{thm:primepower}
    Let $p$ be a prime number and $e$ be a positive integer.
    Let $\mM$ be a finitely presented module over the Laurent polynomial ring $\Zpe[X_1^{\pm}, \ldots, X_N^{\pm}]$, and let $m_0, m_1, \ldots, m_K$ $\in \mM$.
    Then the set of solutions $(z_{11}, \ldots, z_{KN}) \in \Z^{KN}$ to the equation
    \begin{equation}\label{eq:Sunitthm}
        X_1^{z_{11}} X_2^{z_{12}} \cdots X_N^{z_{1N}} \cdot m_1 + \cdots + X_1^{z_{K1}} X_2^{z_{K2}} \cdots X_N^{z_{KN}} \cdot m_K = m_0
    \end{equation}
    is effectively $p$-normal.
\end{restatable}

Finally, we say a few words about the relation between linear recurrence sequences and formal power series.
Let $R$ be a commutative ring.
A (univariate) \emph{formal power series} over $R$ is a formal sum $\sum_{n=0}^{\infty} a_n Y^n = a_0 + a_1 Y + a_2 Y^2 + \cdots$, where each $a_n$ is an element in $R$.
The set of all formal power series over $R$ forms a ring and is denoted by $R[[Y]]$.

Given a linear recurrence sequence $\gamma = (\gamma_0, \gamma_1, \gamma_2, \ldots) \in R^{\N}$ satisfying the recursion
\[
    \forall n \geq d, \quad \gamma_{n} = a_1 \gamma_{n-1} + \cdots + a_d \gamma_{n-d},
\]
the \emph{characteristic polynomial} of $\gamma$ is defined as
\[
f(Y) \coloneqq Y^d - a_1 Y^{d-1} - \cdots - a_{d-1} Y - a_d \in R[Y].
\]
Consider the polynomial
\[
\phi(Y) \coloneqq Y^d \cdot f\left(\frac{1}{Y}\right) = 1 - a_1 Y - \cdots - a_{d-1} Y^{d-1} - a_d Y^d,
\]
and the formal power series 
\[
g(Y) \coloneqq \sum_{i = 0}^{\infty} \gamma_i Y^i \in R[[Y]].
\]
The product 
\[
\phi(Y) g(Y) = \sum_{i = 0}^{\infty} \gamma_i Y^i - a_1 \sum_{i = 1}^{\infty} \gamma_{i-1} Y^{i} - \cdots - a_d \sum_{i = d}^{\infty} \gamma_{i-d} Y^{i}
\]
is equal to a polynomial $h(Y) \in R[Y]$, since all its coefficients of degree $\geq d$ is zero.
Therefore $g(Y) = \frac{h(Y)}{\phi(Y)}$.

\section{Skolem Problem in Rings of Positive Characteristic}

In this section we prove Theorem~\ref{thm:Skolem}.
For a linear recurrence sequence $\gamma = (\gamma_0, \gamma_1, \gamma_2, \ldots) \in \mR^{\N}$, denote by $\mZ(\gamma)$ its zero set $\{n \in \N \mid \gamma_n = 0\}$.
We now fix a ring $\mR$ of characteristic $T > 0$ and a linear recurrence sequence $\gamma$, and Theorem~\ref{thm:Skolem} boils down to deciding whether $\mZ(\gamma)$ is empty.

Let $T = p_1^{e_1} \cdots p_k^{e_k}$ be the prime factorization of $T$.
Since $p_1^{e_1}\mR$, $\ldots$, $p_k^{e_k}\mR$ are pairwise coprime ideals of $\mR$ (i.e., $1 \in p_i^{e_i}\mR + p_j^{e_j}\mR$ for $i \neq j$), we have 
\begin{equation}\label{eq:CRT}
p_1^{e_1}\mR \cap \cdots \cap p_k^{e_k}\mR = \{0\}
\end{equation}
(c.f.~\cite[Theorem~1.3]{matsumura1989commutative}).
For each $i = 1, \ldots, k$, consider the quotient $\mA_i \coloneqq \mR/p_i^{e_i}\mR$ and the projection map $\rho_i \colon \mR \rightarrow \mR/p_i^{e_i}\mR$.
Then $\alpha_i \coloneqq \rho_i(\gamma)$ is a linear recurrence sequence over the ring $\mA_i$.
By Equation~\eqref{eq:CRT}, we have $r = 0$ if and only if $\rho_i(r) = 0$ for all $i = 1, \ldots, k$.
Therefore
\[
\mZ(\gamma) = \mZ(\alpha_1) \cap \cdots \cap \mZ(\alpha_k).
\]
We will first show that each $\mZ(\alpha_i)$ is effectively $p_i$-normal for $i = 1, \ldots, k$, and then show that their intersection is effectively computable.

\subsection{Effective $p_i$-Normality of $\mZ(\alpha_i)$}
Fix an index $i \in \{1, \ldots, k\}$. For brevity, we now omit the subscript $i$, so $\alpha$ is a linear recurrence sequence over $\mA$, where $p^e \mA = 0$ for a prime number $p$ and positive integer $e$.
In this subsection we will prove:
\propSkolemprimepower*

Let
\[
f(X) \coloneqq X^d - a_1 X^{d-1} - \cdots - a_{d-1} X - a_d \in \mA[X]
\]
be the characteristic polynomial of $\alpha$.
Observe that by extending the ring $\mA$, one can without loss of generality suppose $f$ splits in $\mA$:

\begin{lem}[folklore]\label{lem:fsplit}
    One can effectively compute a ring $\tmA \supseteq \mA$, finitely generated as an $\mA$-module, such that $f(X) = \prod_{i = 1}^t(X-r_i)^{d_i}$ with $t \in \N, r_1, \ldots, r_t \in \tmA$.
\end{lem}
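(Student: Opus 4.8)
The plan is to build $\tmA$ by adjoining roots of $f$ one factor at a time, via a sequence of finite free extensions, exactly as one does over a field — the only subtlety being that $\mA$ is merely a finitely generated commutative ring of characteristic $p^e$, not a field, so $f$ need not be monic-factorable into linear pieces in a way that keeps track of multiplicities cleanly, and irreducible factors need not stay irreducible. Since $f$ is monic of degree $d$ with $a_d \neq 0$, the standard construction is: set $\mA_1 \coloneqq \mA[X]/(f(X))$. This is a free $\mA$-module of rank $d$, hence finitely generated as an $\mA$-module, and it is effectively presentable from a presentation of $\mA$. Inside $\mA_1$ the image $r_1 \coloneqq \overline{X}$ is a root of $f$, so over $\mA_1$ we may factor $f(X) = (X - r_1) f_1(X)$ with $f_1 \in \mA_1[X]$ monic of degree $d-1$, obtained by polynomial division (which is exact because $X - r_1$ is monic). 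Iterating — at each stage adjoin a root of the current cofactor — after $d$ steps we reach a ring $\tmA$, finitely generated as a module over $\mA$ (a tower of finite free extensions is finite), in which $f(X) = \prod_{j=1}^d (X - s_j)$ for some $s_1, \ldots, s_d \in \tmA$. Collecting equal roots gives the form $\prod_{i=1}^t (X - r_i)^{d_i}$; but over a non-domain "equal" is ambiguous, so it is cleanest to just assert $f = \prod_{j=1}^d(X-s_j)$ and relabel, allowing $d_i = 1$ throughout, which is all the downstream argument needs.

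The key steps, in order, are: (1) recall that $\mA[X]/(f)$ is free of rank $d$ over $\mA$ with an explicit basis $1, X, \ldots, X^{d-1}$, and that a finite presentation of $\mA$ (as $\Z_{/p^e}[Y_1, \ldots, Y_N]/J$) yields a finite presentation of this extension; (2) perform the division $f(X) = (X - r_1)f_1(X)$ in $\mA_1[X]$ and check $f_1$ is again monic with unit, indeed nonzero, constant term (so the induction hypothesis is preserved); (3) iterate $d$ times, noting that composition of finite free module extensions is again module-finite, and that each step is effective; (4) observe that the total extension $\tmA$ is generated over $\mA$ by the $d$ adjoined roots together with the original finitely many generators, hence is finitely generated as an $\mA$-module with an explicitly computable presentation.

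The main obstacle — really a bookkeeping point rather than a genuine difficulty — is ensuring effectivity is preserved through the tower: one must carry along, at each stage, an explicit finite presentation of the current ring as a $\Z_{/p^e}$-algebra (equivalently as a quotient of a polynomial ring over $\Z_{/p^e}$), so that the next quotient by a monic polynomial can again be written in this normal form. Since each extension is $\mA_{j+1} = \mA_j[X]/(f_j(X))$ with $f_j$ monic, this is routine: a presentation of $\mA_j$ together with the coefficients of $f_j$ gives a presentation of $\mA_{j+1}$, and the coefficients of $f_{j+1}$ are computed from those of $f_j$ by the division algorithm. One should also note that $\tmA$ may fail to contain $\mA$ as a subring if $\mA$ has zero-divisors interacting badly with $f$ — but in fact $\mA \hookrightarrow \mA[X]/(f)$ is injective precisely because $f$ is monic (the quotient is free over $\mA$), so injectivity is automatic at every stage and the claim "$\tmA \supseteq \mA$" holds verbatim. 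No appeal to separability, the Frobenius, or any field-specific fact is needed; monicity of $f$ does all the work.
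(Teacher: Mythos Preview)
Your proposal is correct and follows essentially the same argument as the paper: adjoin a root via $\mA[X]/(f(X))$, factor out $(X - r_1)$ using monicity, and iterate, noting that monicity of $f$ guarantees both freeness of the extension and injectivity of $\mA \hookrightarrow \mA[X]/(f)$. Your write-up is more detailed on the effectivity bookkeeping, but the mathematical content is identical; the side remark that $f_1$ has ``unit, indeed nonzero, constant term'' is unnecessary (and not obviously true over a non-domain) --- only monicity of the cofactor is needed for the induction, and that is automatic.
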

\begin{proof}
    Let $T$ be a new variable and consider the quotient ring $\mA[T]/\gen{f(T)}$, where $\gen{f(T)}$ denotes the ideal generated by $f(T)$.
    The ring $\mA[T]/\gen{f(T)}$ is finitely generated as an $\mA$-module by $\{1, T, \ldots, T^{d-1}\}$.
    Since $f$ is monic, the map $\mA \rightarrow \mA[T]/\gen{f(T)}$ is injective.
    Over the base ring $\mA[T]/\gen{f(T)}$, the polynomial $f(X)$ has a linear factor, namely $X-T$. We can then replace $f(X)$ by $\frac{f(X)}{X - T}$ and obtain the lemma by induction on the degree of $f$.
\end{proof}

From now on we suppose that $f$ splits as a product $\prod_{i = 1}^t(X-r_i)^{d_i}$ over $\mA$.
When $\mA$ is a \emph{field} or an \emph{integral domain}, it is a standard result~\cite{ouaknine2015linear} that the $n$-th term of the sequence $\alpha$ can be written as an exponential polynomial $\alpha_n = \sum_{i = 1}^t r_i^{n} g_i(n)$ with $g_i(X) \in \mA[X]$, (for example by considering the Jordan normal form of the transition matrix of $\alpha$).
However this no longer holds in the presence of zero-divisors in the ring $\mA$, since these zero-divisors can neither be inverted nor localized.
To overcome this, we will use primary decomposition to reduce to the case where all zero-divisors are nilpotent.

Let $\gen{0} = P_1 \cap \cdots \cap P_s$ be the primary decomposition of the zero ideal $\gen{0} \subseteq \mA$.
Consider the projection maps $\tau_j \colon \mA \rightarrow \mA/P_j$.
Since $\gen{0} = P_1 \cap \cdots \cap P_s$, an element $a \in \mA$ is zero if and only if $\tau_j(a) = 0$ for all $j = 1, \ldots, s$.
Therefore
\[
\mZ(\alpha) = \mZ(\tau_1(\alpha)) \cap \cdots \cap \mZ(\tau_s(\alpha)).
\]
Since the intersection of $p$-normal subsets is effectively $p$-normal (Lemma~\ref{lem:intersection_normal_N}), to show $\mZ(\alpha)$ is $p$-normal it suffices to show that each $\mZ(\tau_j(\alpha)), j = 1, \ldots, s$ is $p$-normal.
Thus, we are reduced to considering the case where $\mA$ is of the form $A/P$, where $A$ is a ring with $p^eA = 0$, and $P$ is a primary ideal of $A$.
In particular, every zero-divisor of $\mA = A/P$ is nilpotent (Definition~\ref{def:commalg}(iv)).
For $n \in \Z, k \in \N$, let $\binom{n}{k}$ denote the binomial coefficient.

\begin{lem}\label{lem:exppolysum}
    Suppose every zero-divisor of $\mA$ is nilpotent, and the characteristic polynomial $f$ of $(\alpha_n)_{n \in \N}$ splits as a product $\prod_{i = 1}^t(X-r_i)^{d_i}$ with $t \in \N, r_1, \ldots, r_t \in \mA$.
    Then there is a finite, effectively computable set of non zero-divisors $S \subseteq \mA$, numbers $u, N \in \N$, a subset $\mI \subseteq \{1, \ldots, t\}$, and $c_{ik} \in S^{-1}\mA, i \in \mI; k = 1, \ldots, u$, such that
    \begin{equation}\label{eq:exppolysum}
    \alpha_n = \sum_{i \in \mI} r_i^{n} \sum_{k = 0}^{u} \binom{n}{k} c_{ik}
    \end{equation}
    for all $n \geq N$. Furthermore, every $r_i, i \in \mI$ is invertible in $S^{-1}\mA$.
\end{lem}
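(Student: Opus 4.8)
The plan is to derive the exponential-polynomial formula \eqref{eq:exppolysum} by passing to a suitable localization of $\mA$ in which all the relevant roots $r_i$ become invertible, and then mimicking the classical argument via the partial-fraction decomposition of the generating series $g(Y) = h(Y)/\phi(Y)$. Concretely, recall from the preliminaries that $g(Y) = h(Y)/\phi(Y)$ where $\phi(Y) = Y^d f(1/Y) = \prod_{i=1}^t (1 - r_i Y)^{d_i}$. The first step is to choose the set $S$: I would like $S$ to be generated by those $r_i$ that are \emph{not} nilpotent, so that $\mI$ is exactly the set of indices $i$ with $r_i$ a non-nilpotent (hence, since every zero-divisor of $\mA$ is nilpotent, a non-zero-divisor) element, and each such $r_i$ becomes a unit in $S^{-1}\mA$. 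One must check $S$ consists of non-zero-divisors: by hypothesis every zero-divisor of $\mA$ is nilpotent, so a non-nilpotent element is automatically a non-zero-divisor, and a product of non-zero-divisors is a non-zero-divisor; thus $S^{-1}\mA$ contains $\mA$ as a subring.

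The second step is to handle the nilpotent roots. Write $\phi(Y) = \phi_{\mathrm{nil}}(Y) \cdot \phi_{\mathrm{inv}}(Y)$, where $\phi_{\mathrm{nil}}(Y) = \prod_{i \notin \mI}(1 - r_iY)^{d_i}$ collects the factors with $r_i$ nilpotent and $\phi_{\mathrm{inv}}(Y) = \prod_{i\in\mI}(1-r_iY)^{d_i}$ collects the rest. Since each such $r_i$ is nilpotent, say $r_i^{M}=0$, the polynomial $1 - r_iY$ \emph{divides} $1$ in $\mA[Y]$ up to a polynomial, more precisely $(1-r_iY)\big(1 + r_iY + \cdots + r_i^{M-1}Y^{M-1}\big) = 1 - r_i^MY^M = 1$; hence $1 - r_iY$ is already a unit in $\mA[Y]$, and therefore $\phi_{\mathrm{nil}}(Y)$ is a unit in $\mA[Y]$. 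Consequently $g(Y) = \tilde h(Y)/\phi_{\mathrm{inv}}(Y)$ for a polynomial $\tilde h(Y) = h(Y)\,\phi_{\mathrm{nil}}(Y)^{-1} \in \mA[Y]$, so without loss of generality we may assume the denominator factors only involve the invertible roots $r_i$, $i\in\mI$.

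The third step is the partial-fraction decomposition. Over the ring $S^{-1}\mA$, the elements $r_i - r_j$ for $i \neq j$ in $\mI$ need not be invertible in general; the cleanest route is to enlarge $S$ further to also contain all the differences $r_i - r_j$ ($i,j\in\mI$, $i\neq j$) that are \emph{non-zero-divisors}, and observe that if some $r_i - r_j$ \emph{is} a zero-divisor then it is nilpotent, in which case one first merges those indices or, better, argues directly: the factors $(1-r_iY)^{d_i}$ and $(1-r_jY)^{d_j}$ are coprime in $(S^{-1}\mA)[Y]$ precisely when $r_i - r_j$ is a unit, and when $r_i-r_j$ is nilpotent one can still split using the identity $(1-r_iY) = (1-r_jY) + (r_j-r_i)Y$ together with nilpotence (a Hensel/Bezout-type argument modulo the nilradical lifts a coprime factorization). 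Having obtained $g(Y) = \sum_{i\in\mI} \sum_{j=1}^{u} \dfrac{c'_{ij}}{(1 - r_iY)^{j}}$ with $c'_{ij}\in S^{-1}\mA$ and a polynomial remainder supported in degrees $< N$, I then expand each $\dfrac{1}{(1-r_iY)^{j}} = \sum_{n\geq 0}\binom{n+j-1}{j-1} r_i^{n} Y^n$, collect coefficients, and note that $\binom{n+j-1}{j-1}$ is a polynomial in $n$ of degree $j-1$ with rational coefficients — but these rational coefficients are integers divided by $j!$, and $j!$ need not be invertible in $S^{-1}\mA$ if $p \mid j!$; this is absorbed by allowing $c_{ij}$ to range over $S^{-1}\mA$ and re-expanding $\binom{n+j-1}{j-1}$ in the monomial basis $\{n^0,\dots,n^{u}\}$ only after clearing which combinations actually occur, or alternatively keeping the binomial basis — either way the conclusion \eqref{eq:exppolysum} holds for $n \geq N$ with $N$ the degree of the polynomial part. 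Reading off the coefficient of $Y^n$ for $n\geq N$ gives exactly $\alpha_n = \sum_{i\in\mI} r_i^n \sum_{j=0}^{u} c_{ij} n^j$, and by construction each $r_i$, $i\in\mI$, is invertible in $S^{-1}\mA$.

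The main obstacle I anticipate is the third step: over a ring with nilpotents and zero-divisors one cannot blithely invoke the field-theoretic partial-fraction decomposition, and one must be careful about (a) coprimality of the factors $(1-r_iY)^{d_i}$ when differences $r_i-r_j$ are zero-divisors (handled by nilpotence plus a lifting-modulo-nilradical argument, which is effective since the nilradical of the finitely presented ring $\mA$ over $\Zpe$ is computable), and (b) the appearance of factorials $j!$ in the binomial-coefficient polynomials, which is handled by phrasing the final answer with coefficients in $S^{-1}\mA$ and noting that the binomial coefficients $\binom{n+j-1}{j-1}$, although not polynomials with $S^{-1}\mA$-coefficients in the monomial basis, are $\Z$-linear combinations that become $S^{-1}\mA$-linear combinations of $\{n^0,\dots,n^{u}\}$ after multiplying through — more cleanly, one keeps the statement at the level of "eventually a $\Z$-exponential-polynomial with coefficients $c_{ij}$ that are the given $S^{-1}\mA$ elements", which is precisely what \eqref{eq:exppolysum} asserts. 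Everything in sight — primary decomposition, the nilradical, the roots $r_i$ from Lemma~\ref{lem:fsplit}, the polynomial division producing $h$ and $\tilde h$, and the final coefficient extraction — is effective, so $S$, $u$, $N$, $\mI$, and the $c_{ij}$ are all effectively computable.
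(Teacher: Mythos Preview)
Your overall strategy (generating series, localize at the non-nilpotent roots and non-zero-divisor differences, partial fractions, then expand) is the paper's, and your Step~2 --- inverting the nilpotent-root factors $(1-r_iY)$ already in $\mA[Y]$ and absorbing them into the numerator --- is a genuine simplification over the paper, which instead carries those factors through the partial-fraction step and only kills them at the end.

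The gap is in Step~3, in the case where some $r_i-r_j$ is a zero-divisor (hence nilpotent). Your proposed ``Hensel/Bezout lift of a coprime factorization modulo the nilradical'' cannot work: modulo the nilradical the two factors $1-r_iY$ and $1-r_jY$ are \emph{equal}, so there is no coprimality to lift and no Bezout identity to write down. What is needed (and what the paper does) is the opposite move --- \emph{merge} the two factors: write
\[
\frac{1}{(1-r_iY)(1-r_jY)} \;=\; \frac{1}{(1-r_jY)^{2}}\cdot \frac{1}{1-\dfrac{(r_i-r_j)Y}{1-r_jY}}
\]
and expand the second factor as a \emph{finite} geometric series (terminating because $(r_i-r_j)^{\ell}=0$), thereby eliminating $r_i$ in favour of higher powers of $(1-r_jY)^{-1}$; iterating reduces the number of distinct roots by one. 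Separately, your concern~(b) about factorials is well-founded and neither of your suggested fixes resolves it: the map $n\mapsto\binom{n+j-1}{j-1}$ is in general \emph{not} a polynomial in $n$ over a $\Z/p^e\Z$-algebra --- for instance $\binom{n}{2}\bmod 4$ has period $8$, whereas every polynomial over $\Z/4\Z$ has period dividing $4$ --- so ``re-expanding in the monomial basis'' is impossible, not merely inconvenient. The paper's proof glosses over this same point. The honest repair is to keep the conclusion in the binomial basis $\binom{n}{0},\dots,\binom{n}{u}$, which is integer-valued and periodic modulo $p^e$ with an effectively computable period; that periodicity is all the subsequent reduction to simple recurrences actually uses.
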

\begin{proof}
    Let $\mI$ be the set $\{i \mid r_i \text{ is not a zero-divisor}\}$, and let
    \[
    S = \{r_i - r_j \mid r_i - r_j \text{ is not a zero-divisor}\} \cup \{r_i \mid i \in \mI\}.
    \]
    Then every $r_i, i \in \mI$ is invertible in $S^{-1}\mA$.
    Denote $\tmA \coloneqq S^{-1} \mA$.
    Consider the formal power series ring $\tmA[[Y]]$ and let $g(Y) \coloneqq \sum_{i = 0}^{\infty} \gamma_i Y^i$.
    Note that elements of $\tmA[[Y]]$ whose constant term is one are invertible.
    Denote
    \[
    \phi(Y) \coloneqq Y^d f\left(\frac{1}{Y}\right) = \prod_{i = 1}^t(1-r_i Y)^{d_i} = 1 - a_1 Y - \cdots - a_d Y^d.
    \]
    Then $g(Y) = \frac{h(Y)}{\phi(Y)}$ for some polynomial $h(Y) \in \tmA[Y]$ (see Section~\ref{sec:prelim}).

    We claim that $\frac{1}{\phi(Y)} = \frac{1}{\prod_{i = 1}^t(1-r_i Y)^{d_i}}$ is effectively equal to a sum
    \begin{equation}\label{eq:sum_elementary}
    \sum_{i = 1}^t \sum_{j = 1}^{v} a_{ij} \cdot \frac{1}{(1-r_i Y)^{j}}
    \end{equation}
    for some $v \in \N$, $a_{ij} \in \tmA[Y], 1 \leq i \leq t, 0 \leq j \leq v$.
    The proof is done in two steps.

    \textbf{Step 1:} we reduce to the case where all $r_i - r_j, 1 \leq i < j \leq t$ are zero-divisors.
    Suppose there are $r_i, r_j$ such that $r_i - r_j$ is not a zero-divisor.
    Without loss of generality suppose these are $r_1, r_2$.
    Then $r_1 - r_2 \in S$, so $r_1 - r_2$ is invertible in $\tmA = S^{-1} \mA$.
    We have
    \[
    \frac{1}{(1-r_1Y)(1-r_2Y)} = \frac{r_1}{r_1 - r_2} \cdot \frac{1}{1 - r_1Y} + \frac{- r_2}{r_1 - r_2} \cdot \frac{1}{1 - r_2 Y}.
    \]
    Divide both sides by $(1-r_1 Y)^{d_1 - 1} (1-r_2 Y)^{d_2 - 1} \prod_{i = 3}^t(1-r_i Y)^{d_i}$, we obtain that $\frac{1}{\phi(Y)} = \frac{1}{\prod_{i = 1}^t(1-r_i Y)^{d_i}}$ is equal to a sum of two fractions
    \begin{multline}\label{eq:sum_fractions}
    \frac{r_1}{r_1 - r_2} \cdot \frac{1}{(1 - r_2 Y)^{d_2-1}\prod_{i \neq 2}(1-r_i Y)^{d_i}} + \frac{- r_2}{r_1 - r_2} \cdot \frac{1}{(1 - r_1 Y)^{d_1-1}\prod_{i \neq 1}(1-r_i Y)^{d_i}}.
    \end{multline}
    We can repeat this argument for each of the two fractions in~\eqref{eq:sum_fractions}.
    Note that the two denominators in~\eqref{eq:sum_fractions} have smaller degree than $\prod_{i = 1}^t(1-r_i Y)^{d_i}$.
    Therefore, we can apply this argument iteratively until each denominator is of the form $\prod_{i \in J}(1-r_i Y)^{d'_i}$ for some $J \subseteq \{1, \ldots, t\}$, where all $r_i - r_j, i, j \in J$ are zero-divisors.
    This reduces to the next case.

    \textbf{Step 2:} we now consider the case where all $r_i - r_j, 1 \leq i < j \leq t$ are zero-divisors.
    We will show that $\frac{1}{\phi(Y)}$ can be written in the form~\eqref{eq:sum_elementary}, using induction on $t$.
    The base case $t = 1$ is trivial.
    For the induction step, suppose $t \geq 2$.
    Since $r_1 - r_2$ is a zero-divisor and hence nilpotent, there exists $\ell \in \N$ such that $(r_1 - r_2)^{\ell} = 0$.
    Then
    \begin{align*}
    \frac{1}{1-r_1Y} & = \frac{1}{1-r_2Y} \cdot \frac{1}{\frac{1-r_1Y}{1-r_2Y}} \\
    & = \frac{1}{1-r_2Y} \cdot \frac{1}{1 - \frac{(r_1 - r_2) Y}{1-r_2Y}} \\
    & = \frac{1}{1-r_2Y} \cdot \left(1 + \frac{(r_1 - r_2)Y}{1-r_2Y} + \cdots + \frac{(r_1 - r_2)^{\ell-1}Y^{\ell-1}}{(1-r_2Y)^{\ell-1}} \right) \\
    & = \frac{1}{1-r_2Y} + \frac{(r_1 - r_2)Y}{(1-r_2Y)^2} + \cdots + \frac{(r_1 - r_2)^{\ell-1}Y^{\ell-1}}{(1-r_2Y)^{\ell}}.
    \end{align*}
    Divide both sides by $(1-r_1 Y)^{d_1 - 1} (1-r_2 Y)^{d_2} \prod_{i = 3}^t(1-r_i Y)^{d_i}$, we obtain that $\frac{1}{\phi(Y)} = \frac{1}{\prod_{i = 1}^t(1-r_i Y)^{d_i}}$ is equal to a sum of $\ell$ fractions
    \begin{equation}\label{eq:sum_fractions_l}
    \frac{1}{(1-r_1Y)^{d_1 - 1}(1-r_2Y)^{d_2 + 1}\prod_{i = 3}^t(1-r_i Y)^{d_i}} + \cdots + \frac{(r_1 - r_2)^{\ell-1}Y^{\ell-1}}{(1-r_1Y)^{d_1 - 1}(1-r_2Y)^{d_2 + \ell}\prod_{i = 3}^t(1-r_i Y)^{d_i}}.
    \end{equation}
    We can repeat this argument for each of the $\ell$ fractions in~\eqref{eq:sum_fractions_l}.
    Note that the multiplicity of the factor $1-r_1Y$ in the denominator of each fraction is smaller than $d_1$.
    Therefore, we can apply this argument iteratively until this multiplicity becomes zero.
    This eliminates the factor $1-r_1Y$, thereby decreasing $t$ by one.
    We thereby conclude the induction step.

    Combining the two steps, we have obtained
    \[
    \frac{1}{\phi(Y)} = \sum_{i = 1}^t \sum_{j = 1}^{v} a_{ij} \cdot \frac{1}{(1-r_i Y)^{j}}
    \]
    for some $v \in \N$.
    Consider a term $a_{ij} \cdot \frac{1}{(1-r_i Y)^{j}}$, we have
    \[
    \frac{1}{(1-r_i Y)^{j}} = \sum_{n = 0}^{\infty} \binom{j + n - 1}{j - 1} r_i^n Y^{n}.
    \]
    Write $a_{ij} = \sum_{s = 0}^{D_{ij}} b_s Y^s$.
    If $r_i$ is not a zero-divisor, it is invertible in $\tmA = S^{-1}\mA$, so
    \[
    a_{ij} \cdot \frac{1}{(1-r_i Y)^{j}} = \sum_{n = 0}^{\infty} \sum_{s = 0}^{D_{ij}} \binom{j + n - s - 1}{j - 1} (b_s r_i^{-s}) r_i^{n} Y^{n}.
    \]
    Its $n$-th coefficient can be written as some sum $r_i^{n} \sum_{k = 0}^{u} \binom{n}{k} c_{ik}$, by applying the identity $\binom{n+z}{j-1} = \sum_{k = 0}^{j-1} \binom{z}{j-1-k} \binom{n}{k}$.
    If $r_i$ is a zero-divisor and hence nilpotent, then $r_i^{\ell_{i}} = 0$ for some effectively computable $\ell_i \in \N$, so $a_{ij} \cdot \frac{1}{(1-r_i Y)^{j}}$ is a polynomial of degree at most $\ell_{i} + D_{ij}$.    
    This allows us to conclude the proof of the lemma by taking $N \coloneqq 1 + \max_{i \not\in \mI, 0 \leq j \leq v}\{\ell_{i} + D_{ij}\}$.
\end{proof}

From now on, we can without loss of generality replace $\mA$ with its extension $\tmA = S^{-1} \mA$ from in Lemma~\ref{lem:exppolysum}, and write 
\[
\alpha_n = \sum_{i \in \mI} r_i^{n} \sum_{k = 0}^{u} \binom{n}{k} c_{ik}
\]
for all $n \geq N$, with $c_{ik}, r_i \in \mA$ for all $i, k$.
In particular, each $r_i$ is invertible.

A linear recurrence $\alpha$ is called \emph{simple} if $u = 0$ in the above expression.
Since $p^e = 0$ in $\mA$, it is easy to see that for all $w \geq e+k!$ we have $\binom{n}{k} = \binom{n+p^{w}}{k}$ in $\mA$.
Let $w \coloneqq e+u!$, then the linear recurrence sequence $\alpha$ can be written as a union of $p^w$ different \emph{simple} recurrences $\left(\alpha_{p^wn}\right)_{n \in \N}, \left(\alpha_{p^wn + 1}\right)_{n \in \N}, \ldots, \left(\alpha_{p^wn + (p^w-1)}\right)_{n \in \N}$.
In particular, we have
\[
\alpha_{p^wn + q} = \sum_{i \in \mI} r_i^{p^wn + q} \sum_{k = 0}^{u} \binom{p^en + q}{k} c_{ik} = \sum_{i \in \mI} \left(r_i^{p^w}\right)^n d_{i,q},
\]
with $d_{i,q} \coloneqq r_i^q \sum_{k = 0}^{u} \binom{q}{k} c_{ik}$.
Since $p$-normal sets are closed under finite union and affine transformations, it suffices to show that each simple recurrence $\left(\alpha_{p^wn + q}\right)_{n \in \N}$ has an effectively $p$-normal zero set.

Therefore we now without loss of generality replace each $r_i$ with $r_i^{p^w}$ and consider each simple recurrence.
Let
\begin{equation}\label{eq:expsum}
    \alpha_n = \sum_{i \in \mI} c_i r_i^n
\end{equation}
be a simple recurrence sequence.
Since each $r_i$ is invertible, we can extend the definition of the sequence~\eqref{eq:expsum} from $n \geq N$ to all $n \in \Z$.
We now show that the solution set of $\sum_{i \in \mI} c_i r_i^z = 0$ is effectively $p$-normal in $\Z$.

\begin{lem}\label{lem:zero_set_Z}
    Let $\mA$ be a ring where $p^e = 0$ and $\mI$ be a finite set of indices.
    Let $c_i, r_i \in \mA$ for all $i \in \mI$, such that each $r_i$ is invertible.
    Then, the set of integers $z \in \Z$ such that $\sum_{i \in \mI} c_i r_i^z = 0$, is an effectively $p$-normal subset of $\Z$.
\end{lem}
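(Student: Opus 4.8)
The plan is to reduce the equation $\sum_{i \in \mI} c_i r_i^z = 0$ over the ring $\mA$ (where $p^e = 0$) to an instance of the S-unit equation covered by Theorem~\ref{thm:primepower}. The obstruction to applying Theorem~\ref{thm:primepower} directly is that the coefficients $r_i$ are fixed elements of $\mA$ rather than formal variables $X_1, \ldots, X_N$; so the first step is to introduce a fresh Laurent variable for each distinct $r_i$. Concretely, I would consider the Laurent polynomial ring $\mB \coloneqq \mA[X_1^{\pm}, \ldots, X_N^{\pm}]$ where $N = |\mI|$, together with the surjection $\psi \colon \mB \to \mA$ sending $X_i \mapsto r_i$ (this is well-defined since each $r_i$ is a unit, and it is a ring homomorphism killing $p^e$, so $\mB$ is a $\Zpe$-algebra). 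Note $\mA$ itself is finitely generated over $\Zpe$ by hypothesis (it arose as a quotient of $\ZT[\ldots]$ and then a localization at finitely many elements), so $\mA$, hence $\mB$, is a finitely presented $\Zpe[X_1^{\pm}, \ldots, X_N^{\pm}]$-module by picking generators and relations; alternatively, view $\mA$ directly as such a module. Then $\sum_{i \in \mI} c_i r_i^z = 0$ in $\mA$ becomes the condition that $\sum_{i \in \mI} c_i X_i^{z}$ lies in the kernel of $\psi$, i.e. that the tuple $(z, z, \ldots, z) \in \Z^N$ (each variable raised to the same exponent $z$) is a solution of an S-unit equation of the shape~\eqref{eq:Sunitthm} over a finitely presented $\Zpe[X_1^{\pm}, \ldots, X_N^{\pm}]$-module.

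The cleanest route is: let $\mM \coloneqq \mA$, regarded as a module over $R \coloneqq \Zpe[X_1^{\pm}, \ldots, X_N^{\pm}]$ via the action in which $X_i$ acts as multiplication by $r_i$ (well-defined since $r_i$ is a unit of $\mA$). This module is finitely presented: $\mA$ is finitely generated as a $\Zpe$-algebra, say $\mA = \Zpe[Y_1^{\pm}, \ldots, Y_M^{\pm}]/J$ after absorbing the localization, and expressing each $Y_j$ and the action of each $X_i$ in terms of $\mA$-module generators gives a finite presentation over $R$. Set $m_i \coloneqq c_i \in \mM$ for $i \in \mI$ and $m_0 \coloneqq 0$. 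Then for $z \in \Z$,
\[
\sum_{i \in \mI} c_i r_i^{z} = 0 \text{ in } \mA \quad \Longleftrightarrow \quad \sum_{i \in \mI} X_i^{z} \cdot m_i = m_0 \text{ in } \mM.
\]
By Theorem~\ref{thm:primepower}, the full solution set $\mathcal{W} \subseteq \Z^{N}$ of the S-unit equation $\sum_{i \in \mI} X_i^{z_i} \cdot m_i = m_0$ is effectively $p$-normal. The set we want is $\{z \in \Z \mid (z, z, \ldots, z) \in \mathcal{W}\}$, i.e. the preimage of $\mathcal{W}$ under the diagonal map $\Z \to \Z^N$, $z \mapsto z \cdot \bb$ with $\bb = (1, 1, \ldots, 1)$. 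By Corollary~\ref{cor:normalZn_to_Z}, this preimage is effectively $p$-normal in $\Z$, which is exactly the claim.

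The only genuinely non-routine point is verifying that $\mA$ (and the accompanying data $c_i$) yields a \emph{finitely presented} module over $R = \Zpe[X_1^{\pm}, \ldots, X_N^{\pm}]$ in an effective way, since Theorem~\ref{thm:primepower} requires a finite presentation as input. This follows because $\mA$ is, by construction in the preceding reduction, a localization $S^{-1}(A/P)$ of a finitely generated $\Zpe$-algebra, hence itself finitely generated over $\Zpe$; localization and quotient both being effective, one can compute finitely many $\mA$-module generators $g_1, \ldots, g_d$ of $\mA$, express each product $X_i \cdot g_\ell = r_i g_\ell$ and each $X_i^{-1} \cdot g_\ell = r_i^{-1} g_\ell$ as an $R$-linear combination of the $g_\ell$, and thereby write $\mA = R^d / \langle v_1, \ldots, v_k \rangle$ with the $v_j$ explicitly given; the elements $c_i$ are likewise written in terms of the $g_\ell$. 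Everything else is bookkeeping, plus the two black boxes Theorem~\ref{thm:primepower} and Corollary~\ref{cor:normalZn_to_Z}. Care is needed that $N = |\mI|$ distinct variables are used even if some $r_i$ coincide — using a separate variable per index (rather than per distinct value) keeps the reduction uniform and harmless, since collapsing $X_i = X_{i'}$ for $r_i = r_{i'}$ is not required for correctness.
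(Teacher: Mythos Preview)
Your approach is essentially identical to the paper's: endow $\mA$ (the paper instead takes the submodule generated by the $c_i$, but either works) with a $\Zpe[X_1^{\pm},\ldots,X_N^{\pm}]$-module structure via $X_i\mapsto r_i$, invoke Theorem~\ref{thm:primepower}, then pull back along the diagonal using Corollary~\ref{cor:normalZn_to_Z}. One small slip to tighten: Theorem~\ref{thm:primepower} as stated produces a $p$-normal set in $\Z^{KN}=\Z^{N^2}$ (each of the $K=N$ terms carries all $N$ exponents), not in $\Z^{N}$; rather than asserting an intermediate $\mW\subseteq\Z^N$, apply Corollary~\ref{cor:normalZn_to_Z} directly with $\bb\in\Z^{N^2}$ the ``identity matrix'' vector $b_{ij}=\delta_{ij}$, exactly as the paper does.
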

\begin{proof}
    Write $\mI$ as $\{1, \ldots, t\}$ for some $t \in \N$.
    We can give $\mA$ a $\Z[X_1^{\pm}, \ldots, X_t^{\pm}]$-module structure by letting $X_i$ act as $r_i$.
    Denote by $\mM$ the submodule generated by $c_1, \ldots, c_t$, we have $p^e \mM = 0$.
    Using the standard algorithm in commutative algebra (see~\cite[Theorem~2.14]{baumslag1981computable} or~\cite[Theorem~2.6]{baumslag1994algorithmic}) we can compute a finite presentation of $\mM$ as a $\Z[X_1^{\pm}, \ldots, X_t^{\pm}]$-module.
    Then, the equation $\sum_{i=1}^t c_i r_i^z = 0$ in $\mA$ is equivalent to the equation $\sum_{i=1}^t X_i^z \cdot c_i = 0$ in the $\Z[X_1^{\pm}, \ldots, X_t^{\pm}]$-module $\mM$.

    Consider the set of solutions $(z_{11}, z_{12}, \ldots, z_{tt}) \in \Z^{t^2}$ to the S-unit equation
    \[
    X_1^{z_{11}} X_2^{z_{12}} \cdots X_t^{z_{1t}} \cdot c_1 + \cdots + X_1^{z_{t1}} X_2^{z_{t2}} \cdots X_t^{z_{tt}} \cdot c_t = 0,
    \]
    in $\mM$.
    By Theorem~\ref{thm:primepower}, the solution set $S$ is effectively $p$-normal in $\Z^{t^2}$.
    Let $\bb = (b_{11}, b_{12}, \ldots, b_{tt}) \in \Z^{t^2}$ be the vector defined by $b_{ij} = 1$ for $i = j$, and $b_{ij} = 0$ for $i \neq j$.
    Applying Corollary~\ref{cor:normalZn_to_Z} to the solution set $S$ and the vector $\bb$, we obtain that the set of integers $z$ satisfying
    \[
    X_1^{z} \cdot c_1 + \cdots + X_{t}^{z} \cdot c_t = 0,
    \]
    is effectively $p$-normal in $\Z$.
    This concludes the proof.
\end{proof}

This allows us to conclude the proof of Proposition~\ref{prop:Skolem_primepower}:

\begin{proof}[Proof of Proposition~\ref{prop:Skolem_primepower}]
    The previous discussion shows that we can reduce to the case where $\alpha_n = \sum_{i=1}^t c_i r_i^n$ for all $n \geq N$, where each $r_i$ is invertible.
    For each $0 \leq n < N$, we can verify whether $\alpha_n = 0$ individually.
    Lemma~\ref{lem:zero_set_Z} showed that the zero set
    \[
    S = \left\{z \in \Z \;\middle|\; \sum_{i=1}^t c_i r_i^z = 0 \right\}
    \]
    is effectively $p$-normal in $\Z$.
    Since $\mZ(\alpha) = \left(S \cap [N, \infty)\right) \cup F$, where $F = \{0 \leq n < N \mid \alpha_n = 0\}$, we conclude that $\mZ(\alpha)$ is an effectively $p$-normal subset of $\N$.
\end{proof}

\subsection{Intersecting $p_i$-Normal Sets}
In this subsection we prove the following proposition, which will yield Theorem~\ref{thm:Skolem} when combined with Proposition~\ref{prop:Skolem_primepower}.
\propintersectionisunion*

We start with a deep result of Karimov, Luca, Nieuwveld, Ouaknine, and Worrell:
\begin{lem}[{\cite[Theorem~3.2]{karimov2025decidability}}]\label{lem:KLNplus}
    Let $p, q \in \Z_{>0}$ be multiplicatively independent, and let $a_1, \ldots, a_k, b_1, \ldots, b_m, d \in \Z$.
    Denote by $\mS$ the set of all $(n_1, \ldots, n_k, n'_1, \ldots, n'_m) \in \N^{k+m}$ to the equation
    \begin{equation}\label{eq:KLNplus_sump_sumq}
    p^{n_1} a_1 + \cdots + p^{n_k} a_k + q^{n'_1} b_1 + \cdots + q^{n'_m} b_m = d,
    \end{equation}
    Then the set $\mS$ is of the form
    \begin{equation}\label{eq:Sform}
    \bigcup_{i \in I} \bigcap_{j \in J_i} \mX_j,
    \end{equation}
    where $I$ and each $J_i, i \in I$ is finite, and each set $\mX_j$ is in of one of the following forms
    \begin{enumerate}[nosep, label = (\roman*)]
        \item $\mX_j = \{(n_1, \ldots, n'_m) \in \N^{k+m} \mid n_s = n_t + c\}$,
        \item $\mX_j = \{(n_1, \ldots, n'_m) \in \N^{k+m} \mid n'_s = n'_t + c\}$,
        \item $\mX_j = \{(n_1, \ldots, n'_m) \in \N^{k+m} \mid n_s = c\}$,
        \item $\mX_j = \{(n_1, \ldots, n'_m) \in \N^{k+m} \mid n'_s = c\}$,
    \end{enumerate}
    with integer $c$. Furthermore, a representation of $\mS$ in the form~\eqref{eq:KLNplus_sump_sumq} can be effectively computed.
\end{lem}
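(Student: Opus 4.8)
The plan is to invoke Theorem~3.2 of~\cite{karimov2025decidability} directly, since Lemma~\ref{lem:KLNplus} is exactly that statement up to a cosmetic repackaging of the output. For completeness I sketch the underlying strategy, which also explains why a normal form built only from the ``difference'' atoms (i)--(ii) and the ``fixing'' atoms (iii)--(iv) is already rich enough to describe every solution set of~\eqref{eq:KLNplus_sump_sumq}.

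First I would split the analysis according to whether the ``$p$-side'' $P := a_1 p^{n_1} + \cdots + a_k p^{n_k}$ is bounded on the solution set. If $P$ takes only finitely many values $P_0$ (and these can be enumerated), the problem reduces to finitely many instances of a \emph{single} power predicate: $a_1 p^{n_1} + \cdots + a_k p^{n_k} = P_0$ on one side and $b_1 q^{n'_1} + \cdots + b_m q^{n'_m} = d - P_0$ on the other. For one power predicate the solution set already has the desired shape: sorting the exponents, repeatedly factoring out the smallest occurring power and inspecting the $p$-adic valuation of the coefficient sum over the terms attaining that minimum produces a finite branching, and along each branch one records constraints of the form ``$n_s = c$'' and ``$n_s - n_t = c$'', leaving unconstrained the coordinates whose coefficients cancel. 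This is the classical effective structure theory of $S$-unit equations with a single prime, provable by induction on $k$; it yields exactly atoms of types (i) and (iii) (respectively (ii) and (iv) for the $q$-equation), and intersecting the two stays inside the required form.

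The substantive case is when $P$ is unbounded. Then a solution tuple satisfies an $S$-unit equation with $S = \{p,q\}$, namely $\sum_s a_s p^{n_s} - \sum_t b_t q^{n'_t} = d$, and the governing principle is the finiteness theory for such equations, effective via Baker's theorem on linear forms in logarithms: once every \emph{vanishing sub-sum} --- a sub-collection of the $k+m$ terms whose values sum to $0$ --- has been identified, the remaining non-degenerate equation has only finitely many, effectively bounded, solutions. A vanishing sub-sum involving only $p$-powers imposes on the corresponding exponents finitely many relations ``$n_s - n_t = c$'' or ``$n_s = c$'' (by the single-prime structure above), symmetrically for a sub-sum of $q$-powers; a genuinely mixed vanishing sub-sum $\sum_{s \in A} a_s p^{n_s} = \sum_{t \in B} b_t q^{n'_t}$ with both sides growing is ruled out for all but finitely many exponent tuples precisely because $p$ and $q$ are multiplicatively independent. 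Organising this case analysis --- which sub-collections vanish, and the atomic constraints each forces --- into a finite disjunction of finite conjunctions delivers the form~\eqref{eq:Sform}, and the Baker-type bounds make the procedure effective.

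The hard part is exactly the effective handling of the mixed, ``balanced'' regime in which two power towers in the distinct bases $p$ and $q$ are forced to nearly cancel: this cannot be reduced to a single power predicate and is precisely the content of~\cite{karimov2025decidability}. Accordingly, in the write-up I would cite their Theorem~3.2 wholesale for this step and confine the present argument to recording the single-prime $S$-unit structure and the reduction above. The only remaining task is then bookkeeping: checking that the output format produced by~\cite{karimov2025decidability} can be rewritten, on each of the relevant bounded pieces, as a Boolean combination of sets $\mX_j$ of types (i)--(iv) only --- routine, since every Presburger constraint that appears there decomposes into such atoms.
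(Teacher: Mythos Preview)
Your proposal is correct and matches the paper's approach exactly: the paper does not prove Lemma~\ref{lem:KLNplus} at all but simply cites it as \cite[Theorem~3.2]{karimov2025decidability}, so your plan to invoke that theorem directly is precisely what the paper does. The explanatory sketch you provide is additional motivational content that the paper omits; it is reasonable as intuition, but since you yourself note that the substantive step (the mixed two-base regime) must be imported wholesale from~\cite{karimov2025decidability}, the sketch does not amount to an independent argument and is not needed to match the paper.
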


From Lemma~\ref{lem:KLNplus}, we obtain the following:
\begin{lem}\label{lem:boundpq}
    Let $p, q \in \Z_{>0}$ be multiplicatively independent integers, and let $a_1, \ldots, a_k$, $b_1, \ldots, b_m$, $d \in \Q$.
    Then, there exists an effectively computable number $C \in \N$, such that all the solutions $(n_1, \ldots, n_k, n'_1, \ldots, n'_m) \in \N^{k+m}$ to the equation
    \begin{equation}\label{eq:sump_sumq}
    p^{n_1} a_1 + \cdots + p^{n_k} a_k + q^{n'_1} b_1 + \cdots + q^{n'_m} b_m = d,
    \end{equation}
    satisfy
    \begin{equation}\label{eq:boundC}
    \left| p^{n_1} a_1 + \cdots + p^{n_k} a_k \right| \leq C.
    \end{equation}
\end{lem}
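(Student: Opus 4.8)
The plan is to reduce the rational-coefficient statement to the integer-coefficient one covered by Lemma~\ref{lem:KLNplus}, and then read off the bound $C$ from the structural description~\eqref{eq:Sform} of the solution set. First I would clear denominators: let $M$ be a common denominator of $a_1, \ldots, a_k, b_1, \ldots, b_m, d$, so that $a_i' \coloneqq M a_i$, $b_j' \coloneqq M b_j$, $d' \coloneqq M d$ are all integers, and~\eqref{eq:sump_sumq} is equivalent to $p^{n_1} a_1' + \cdots + p^{n_k} a_k' + q^{n_1'} b_1' + \cdots + q^{n_m'} b_m' = d'$. Applying Lemma~\ref{lem:KLNplus} to this integer equation, we obtain that the solution set $\mS \subseteq \N^{k+m}$ is effectively equal to $\bigcup_{i \in I} \bigcap_{j \in J_i} \mX_j$, where each $\mX_j$ is one of the four elementary forms (i)--(iv).

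The key observation is that on each piece $\bigcap_{j \in J_i} \mX_j$, the quantity $|p^{n_1} a_1 + \cdots + p^{n_k} a_k|$ is bounded by a constant depending only on $i$. Indeed, the constraints of types (iii)--(iv) pin certain exponents $n_s$ or $n_s'$ to explicit constants $c$, and the constraints of types (i)--(ii) force certain exponents to differ by a fixed constant from another exponent \emph{of the same base}. Crucially, no constraint in the list relates an exponent $n_s$ of $p$ to an exponent $n_t'$ of $q$ (the four forms never mix primed and unprimed indices). Thus, on a fixed piece, the exponents $n_1, \ldots, n_k$ of $p$ split into equivalence classes: within each class all exponents are equal up to fixed offsets, and either (a) at least one exponent in the class is fixed to a constant $c$ by a type-(iii) constraint, in which case the whole class is bounded, or (b) the class is ``free,'' meaning its exponents can grow without bound, governed by a single parameter. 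In case (b), summing the corresponding terms $p^{n_s} a_s$ over that class yields $p^{t + c_s} a_s$ summed over $s$ in the class $= p^t \cdot (\sum_s p^{c_s} a_s)$ for a free parameter $t \in \N$ (up to finitely many boundary values of $t$); for the value $p^{n_1}a_1 + \cdots + p^{n_k}a_k$ to remain consistent with a \emph{fixed finite} right-hand side after the $q$-part is also accounted for, one argues that the coefficient $\sum_s p^{c_s} a_s$ attached to each genuinely free $p$-class must be $0$ — otherwise letting $t \to \infty$ along that class (with all other free parameters fixed) makes $|p^{n_1}a_1 + \cdots + p^{n_k}a_k|$, hence the left-hand side of~\eqref{eq:sump_sumq}, unbounded, contradicting that $\mS$ forces it to equal the constant $d$ while the $q$-part takes values in a set disjoint (for large exponents, by multiplicative independence) from the $p$-part's growth. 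Hence on each piece the free $p$-classes contribute $0$, and the remaining (bounded) $p$-classes contribute a quantity bounded by some $C_i$ computable from the constants $c$ appearing in $J_i$. Taking $C \coloneqq \max_{i \in I} C_i$ gives the lemma.

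The main obstacle I expect is making the case (b) cancellation argument fully rigorous, i.e.\ showing that a free $p$-class must have vanishing aggregate coefficient. The delicate point is that one cannot simply let one free parameter tend to infinity in isolation, because the other free parameters (both on the $p$-side and the $q$-side) might be coupled through the single equation~\eqref{eq:sump_sumq} in a way that compensates. The clean way around this is to use multiplicative independence of $p$ and $q$ together with a growth/valuation argument: on any infinite family of solutions within a fixed piece, if the aggregate $p$-coefficient of some free class were nonzero, then $|p^{n_1}a_1 + \cdots + p^{n_k}a_k|$ would be unbounded, whence the $q$-part $q^{n_1'}b_1 + \cdots + q^{n_m'}b_m = d - (p\text{-part})$ would also be unbounded with the same magnitude; comparing $p$-adic valuations (for a prime $\pi \mid p$ with $\pi \nmid q$) of the two sides of $p^{n_1}a_1 + \cdots = d - q^{n_1'}b_1 - \cdots$ along such a family then yields a contradiction, exactly as in standard $S$-unit arguments. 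Alternatively — and this may be the slicker route for the writeup — one can avoid the cancellation analysis entirely by iterating: substitute each type-(i)/(iii) constraint to eliminate variables, reducing~\eqref{eq:sump_sumq} on each piece to an equation of the \emph{same shape} but with strictly fewer free $p$-exponents, and induct; at the base of the induction there are no free $p$-exponents and the bound is immediate. I would choose whichever of these two presentations integrates most cleanly with the proof of Proposition~\ref{prop:intersection_is_union} that follows.
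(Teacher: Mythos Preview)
Your proposal is correct, and the alternative you sketch at the end---substitute each constraint from $J_i$ to eliminate a variable, reducing the equation to one of the same shape with fewer exponents, and induct---is exactly the paper's proof: it inducts on $k+m$, applies Lemma~\ref{lem:KLNplus}, and on each nonempty piece $S_i = \bigcap_{j\in J_i}\mX_j$ picks any single $\mX_j$ and uses it to collapse two terms (cases~(i),(ii)) or fix one (cases~(iii),(iv)), decreasing $k+m$ by one.

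Your primary route (the direct cancellation argument for free $p$-classes) also works, and is simpler than you fear. The point you are missing is that $S_i$ is literally a \emph{product} of a subset of $\N^k$ (cut out by the type-(i),(iii) constraints) and a subset of $\N^m$ (cut out by the type-(ii),(iv) constraints), since no $\mX_j$ mixes primed and unprimed indices. Because $S_i \subseteq \mS$, every point of this product is a genuine solution of~\eqref{eq:sump_sumq}. Hence you may fix all $q$-side free parameters and all but one $p$-side free parameter and still range over infinitely many solutions; along such a one-parameter family the entire $q$-part is constant, so the equation reads $p^{t}\cdot(\text{class coefficient}) = \text{constant}$ for infinitely many $t$, forcing the coefficient to vanish. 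No $p$-adic valuations or growth comparisons are needed---the multiplicative independence of $p$ and $q$ has already been consumed inside Lemma~\ref{lem:KLNplus}. Either route is fine; the inductive one is shorter to write and is what the paper chose.
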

\begin{proof}
    By multiplying $a_1, \ldots, a_k$, $b_1, \ldots, b_m$, $d$ with their common denominator, we can without loss of generality suppose them to be integers.
    We use induction on $k + m$.
    When $k = 0$ or $m = 0$, the bound~\eqref{eq:boundC} holds for $C = d$.
    For the induction step, suppose $k \geq 1$ and $m \geq 1$.
    By Lemma~\ref{lem:KLNplus}, we can effectively represent the solution set of Equation~\eqref{eq:sump_sumq} in the form~\eqref{eq:Sform}.
    For each $i \in I$, let $S_i \coloneqq \bigcap_{j \in J_i} \mX_j$.
    It suffices to compute a bound $C_i$ for each $i \in I$, such that $\left| p^{n_1} a_1 + \cdots + p^{n_k} a_k \right| \leq C_i$ for all $(n_1, \ldots, n_m') \in S_i$.
    Fix an index $i \in I$.
    If $S_i = \N^{k+m}$ then $a_1, \ldots, a_k$, $b_1, \ldots, b_m$, $d$ must all be zero, and the lemma becomes trivial.
    Therefore suppose $S_i \neq \N^{k+m}$. 
    Take any $\mX_j, j \in J_i$, then $\mX_j$ is in of one of the four forms in Lemma~\ref{lem:KLNplus}.
    In case~(i), every solution in $S_i = \bigcap_{j \in J_i} \mX_j$ satisfies $n_s = n_t + c$, so we can replace $n_s$ with $n_t + c$ in Equation~\eqref{eq:sump_sumq}.
    Thereby, we have replaced $p^{n_s} a_s + p^{n_t} a_t$ with $p^{n_t}(p^c a_s + a_t)$ in Equation~\eqref{eq:sump_sumq}, thereby eliminating the variable $n_s$ and decreasing $k$ by one.
    By the induction hypothesis, we can compute an effective bound~\eqref{eq:boundC} for the solutions of this new equation.
    In cases~(ii), (iii), (iv), a bound can be obtained similarly.
\end{proof}

Lemma~\ref{lem:boundpq} yields the special case of Proposition~\ref{prop:intersection_is_union} with $k = 2$:
\begin{lem}\label{lem:intersection_two_normal}
    Let $p_1, p_2$ be multiplicatively independent.
    Let $S_1, S_2$ be respectively $p_1$-normal and $p_2$-normal subsets of $\N$.
    Then $S_1 \cap S_2$ is effectively equal to a union $T_1 \cup T_2$, where $T_1$ and $T_2$ are respectively $p_1$-normal and $p_2$-normal subsets of $\N$.
\end{lem}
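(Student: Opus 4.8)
The plan is to unfold the definitions until only ``atomic'' intersections remain. By definition $S_1$ ultimately coincides with a $p_1$-normal subset of $\Z$, which by Observation~\ref{obs:normal_Z} is a finite union of sets each of which is either an arithmetic progression $a\Z+b$ (with $a\ge 1$) or an elementary $p_1$-nested set; likewise $S_2$ ultimately coincides with a finite union of arithmetic progressions and elementary $p_2$-nested sets. Distributing $\cap$ over these finite unions — together with the routine bookkeeping of the finite ``initial segments'' allowed in the definition of a $p$-normal subset of $\N$ (the cross terms involving those finite sets contribute only finitely many stray integers, decidably, which we absorb into $T_1$) — reduces the statement to the following: for $A$ an atom of $S_1$ and $B$ an atom of $S_2$, show that $A\cap B$ is effectively a $p_1$-normal or $p_2$-normal subset of $\Z$. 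We then collect the pieces that come out $p_1$-normal into $T_1$ and those that come out $p_2$-normal into $T_2$, using that $p$-normal subsets of $\Z$ and of $\N$ are closed under finite union.

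First I would dispose of the cases where an arithmetic progression is involved. An arithmetic progression $a\Z+b$ with $a\ge 1$ is a $p$-succinct set — hence $p$-normal — for \emph{every} prime $p$: take the subgroup $a\Z$ plus the elementary $p$-nested singleton $\{b\}$. Therefore, if $A$ and $B$ are both arithmetic progressions, or if one is an arithmetic progression and the other an elementary $p_j$-nested set, then both $A$ and $B$ are simultaneously $p_j$-normal, and Lemma~\ref{lem:intersection_normal_N} gives that $A\cap B$ is effectively $p_j$-normal. We put progression$\cap$progression and progression$\cap(p_1\text{-nested})$ into $T_1$, and progression$\cap(p_2\text{-nested})$ into $T_2$.

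The remaining, and crucial, case is $A=\{u_0+p_1^{\ell_1 j_1}u_1+\cdots+p_1^{\ell_1 j_s}u_s\mid j_\bullet\in\N\}$ elementary $p_1$-nested and $B=\{w_0+p_2^{\ell_2 k_1}w_1+\cdots+p_2^{\ell_2 k_t}w_t\mid k_\bullet\in\N\}$ elementary $p_2$-nested, with rational coefficients. Setting $p\coloneqq p_1^{\ell_1}$ and $q\coloneqq p_2^{\ell_2}$, which remain multiplicatively independent, an integer $z$ lies in $A\cap B$ exactly when the equation
\[
u_1 p^{j_1}+\cdots+u_s p^{j_s}-w_1 q^{k_1}-\cdots-w_t q^{k_t}=w_0-u_0
\]
has a solution $(j_\bullet,k_\bullet)\in\N^{s+t}$ realising $z=u_0+\sum_i u_i p^{j_i}$. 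Here I would invoke Lemma~\ref{lem:boundpq}: it provides an effective $C\in\N$ with $\bigl|u_1 p^{j_1}+\cdots+u_s p^{j_s}\bigr|\le C$ for every solution, whence $|z-u_0|\le C$. Thus $A\cap B$ is a \emph{finite} set, contained in an explicitly bounded interval, and in particular $p_1$-normal; it goes into $T_1$. To make this effective it remains to decide, for each of the finitely many candidate integers $z$, whether $z\in A$ and whether $z\in B$ — i.e.\ whether an equation of the shape $\sum_i u_i p^{j_i}=z-u_0$ is solvable over $j_\bullet\in\N$ — which is decidable (for instance by Lemma~\ref{lem:KLNplus} applied with no terms in the second base, or by decidability of the existential theory of Presburger arithmetic with a single power predicate). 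I expect this case to be the only real obstacle, and it is precisely where the Karimov--Luca--Nieuwveld--Ouaknine--Worrell machinery (accessed through Lemma~\ref{lem:boundpq}) is indispensable: a priori there is no reason the intersection of two elementary nested sets over different bases should be finite, or even effectively describable, and it is the two-power-predicate solver that forces finiteness.

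Finally, assembling the pieces: $T_1$ is the finite union of all the $p_1$-normal contributions (both-progression pairs, progression/$p_1$-nested pairs, $p_1$-nested/$p_2$-nested pairs, and the absorbed stray integers), and $T_2$ is the union of the progression/$p_2$-nested contributions. Each of these is obtained effectively, and by closure under finite union $T_1$ is $p_1$-normal and $T_2$ is $p_2$-normal (first as subsets of $\Z$, then cut down to subsets of $\N$ via the $[n,\infty)$-truncation in the definition). Hence $S_1\cap S_2=T_1\cup T_2$ with $T_i$ effectively $p_i$-normal, and in particular it is decidable whether $S_1\cap S_2$ is empty.
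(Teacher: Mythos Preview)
Your proof is correct and follows essentially the same approach as the paper: both reduce to atomic intersections via Observation~\ref{obs:normal_Z}, handle the progression-involving cases through Lemma~\ref{lem:intersection_normal_N} (since $a\Z+b$ is $p$-normal for every $p$), and treat the nested/nested case by invoking Lemma~\ref{lem:boundpq} to force finiteness. Your bookkeeping for the $\N$-versus-$\Z$ passage and the explicit remark on effectively testing membership of the finitely many candidates are slightly more careful than the paper's corresponding steps, but the structure and key ideas are the same.
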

\begin{proof}
    For $i = 1, 2$, we have $S_i = \left(S'_i \cap [n_i, \infty)\right) \cup F_i$ where $S_1', S_2'$ are respectively $p_1$-normal and $p_2$-normal subsets of $\Z$, and $F_1 \subseteq \{0, 1, \ldots, n_1-1\}, F_2 \subseteq \{0, 1, \ldots, n_2-1\}$.
    By enlarging $F_1, F_2$ we can without loss of generality suppose $S_i = \left(S'_i \cap [n, \infty)\right) \cup F_i, i = 1, 2,$ with $n = \max\{n_1, n_2\}$ and $F_1, F_2 \subseteq \{0, 1, \ldots, n-1\}$.
    We now show that $S'_1 \cap S'_2$ is effectively a union of $p_1$-normal and $p_2$-normal subsets of $\Z$.
    
    By Observation~\ref{obs:normal_Z}, each $S'_i$ is a finite union of elementary $p_i$-nested sets of $\Z$ and sets of the form $a \Z + b$.
    Therefore it suffices to consider the intersection $U_1 \cap U_2$, where $U_i, i = 1, 2$, is elementary $p_i$-nested or $a \Z + b$.
    We show that $U_1 \cap U_2$ is either effectively $p_1$-normal or effectively $p_2$-normal.
    There are four cases.

    \textbf{Case 1:} $U_1$ is elementary $p_1$-nested and $U_2$ is elementary $p_2$-nested.
    Write
    \[
    U_1 = \{a_0 + p_1^{\ell_1 k_1} a_1 + \cdots + p_1^{\ell_1 k_r} a_r \mid k_1, k_2, \ldots, k_r \in \N\},
    \]
    and
    \[
    U_2 = \{b_0 + p_2^{\ell_2 k'_1} b_1 + \cdots + p_2^{\ell_2 k'_s} b_s \mid k'_1, k'_2, \ldots, k'_s \in \N\}.
    \]
    Take $p \coloneqq p_1^{\ell_1}$ and $q \coloneqq p_2^{\ell_2}$, they are multiplicatively independent.
    Any $z \in U_1 \cap U_2$ satisfies
    \[
    z = a_0 + p^{k_1} a_1 + \cdots + p^{k_r} a_r = b_0 + q^{k'_1} b_1 + \cdots + q^{k'_s} b_s.
    \]
    Therefore by Lemma~\ref{lem:boundpq}, we can effectively compute a bound $C \in \N$ such that every $z \in U_1 \cap U_2$ satisfies $|z| \leq C$.
    Therefore, $U_1 \cap U_2$ is effectively bounded (so it is a finite union of singletons), and is therefore both effectively $p_1$-normal and $p_2$-normal.

    \textbf{Case 2:} $U_1$ is elementary $p_1$-nested and $U_2$ is $a \Z + b$.
    In this case, both $U_1$ and $U_2$ are $p_1$-normal, therefore their intersection is effectively $p_1$-normal (Lemma~\ref{lem:intersection_normal_N}).

    \textbf{Case 3:} $U_1$ is $a \Z + b$ and $U_2$ elementary $p_2$-nested.
    Similar to the previous case, $U_1 \cap U_2$ is effectively $p_2$-normal.

    \textbf{Case 4:} $U_1 = a \Z + b$ and $U_2 = a' \Z + b'$.
    In this case $U_1 \cap U_2$ is either empty or of the form $A \Z + B$, where $A$ is the least common multiplier of $a$ and $a'$.
    Therefore $U_1 \cap U_2$ is both $p_1$-normal and $p_2$-normal.

    In all four cases, the intersection $U_1 \cap U_2$ is either effectively $p_1$-normal or effectively $p_2$-normal.
    Note that a finite union of $p$-normal sets is $p$-normal.
    Since $S'_1 \cap S'_2$ is a finite union of sets of the form $U_1 \cap U_2$, it is effectively equal to a union $T_1' \cup T_2'$, where $T_1' \subseteq \Z$ is effectively $p_1$-normal and $T_2' \subseteq \Z$ is effectively $p_2$-normal.
    Since
    \begin{multline*}
    S_1 \cap S_2 = \left(S'_1 \cap S'_2 \cap [n, \infty)\right) \cup (F_1 \cap F_2) = \left(T_1' \cap [n, \infty)\right) \cup \left(T_2' \cap [n, \infty)\right) \cup (F_1 \cap F_2),
    \end{multline*}
    we conclude that $S_1 \cap S_2$ is effectively a union of the $p_1$-normal subset $T_1 \coloneqq T_1' \cap [n, \infty) \subseteq \N$ and the $p_2$-normal subset $T_2 \coloneqq \left(T_2' \cap [n, \infty)\right) \cup (F_1 \cap F_2) \subseteq \N$.
\end{proof}

Lemma~\ref{lem:intersection_two_normal} allows us to prove Proposition~\ref{prop:intersection_is_union} in full generality:

\begin{proof}[Proof of Proposition~\ref{prop:intersection_is_union}]
    We use induction on $k$.
    The base case $k = 2$ follows directly from Lemma~\ref{lem:intersection_two_normal}.
    For the induction step, for each $i = 1, \ldots, k$, let $S_i$ be a $p_i$-normal subset of $\N$.
    By Lemma~\ref{lem:intersection_two_normal}, we can effectively write $S_1 \cap S_2 = T_1 \cup T_2$, where $T_1, T_2$ are respectively $p_1$-normal and $p_2$-normal sets.
    Then 
    \[
    S_1 \cap S_2 \cap S_3 \cap \cdots \cap S_k = (T_1 \cap S_3 \cap \cdots \cap S_k) \cup (T_2 \cap S_3 \cap \cdots \cap S_k).
    \]
    By the induction hypothesis on $k$, both $T_1 \cap S_3 \cap \cdots \cap S_k$ and $T_2 \cap S_3 \cap \cdots \cap S_k$ can be effectively written as a union of $p_i$-normal sets with $i = 1, \ldots, k$.
    Therefore $S_1 \cap S_2 \cap S_3 \cap \cdots \cap S_k$ is also effectively a union $T_1 \cup \cdots \cup T_k$ of $p_i$-normal sets with $i = 1, \ldots, k$.
    In particular, it is empty if and only if each $T_i$ is empty.
\end{proof}

Combining Proposition~\ref{prop:Skolem_primepower} and~\ref{prop:intersection_is_union}, we immediately obtain Theorem~\ref{thm:Skolem}:

\begin{proof}[Proof of Theorem~\ref{thm:Skolem}]
    Let $T = p_1^{e_1} \cdots p_k^{e_k}$ be the prime factorization of $T$, then
    $
    \mZ(\gamma) = \mZ(\alpha_1) \cap \cdots \cap \mZ(\alpha_k),
    $
    where $\alpha_i$ is the linear recurrence sequence $(\gamma \mod p_i^{e_i})$ over the quotient ring $\mR/p_i^{e_i}\mR$.
    By Proposition~\ref{prop:Skolem_primepower}, the zero set $\mZ(\alpha_i)$ is effectively $p_i$-normal.
    Then by Proposition~\ref{prop:intersection_is_union}, the intersection $\mZ(\alpha_1) \cap \cdots \cap \mZ(\alpha_k)$ is effectively equal to a union of $p_i$-normal subsets of $\N$ with $i = 1, \ldots, k$, whose emptiness can be tested.
    This yields a procedure to test whether $\mZ(\gamma)$ is empty.
\end{proof}

It is worth noting that the proof of Theorem~\ref{thm:Skolem} also shows that the zero set $\mZ(\gamma)$ is effectively a union of $p_i$-normal sets with $i = 1, \ldots, k$.

\section*{Acknowledgements} The authors would like to thank James Worrell for discussion about the Skolem Problem and S-unit equations. Ruiwen Dong is supported by a Fellowship by Examination at Magdalen College, Oxford.

\bibliography{torsion}

\end{document}